\newcommand{\N}{\mathbb{N}}
\newcommand{\dist}{\operatorname{dist}}
\renewcommand{\deg}{\operatorname{deg}}
\newcommand{\diam}{\operatorname{diam}}
\newcommand{\parti}[1]{\operatorname{part}(#1)}
\renewcommand{\leq}{\leqslant}
\renewcommand{\geq}{\geqslant}
\newcommand{\subs}[1]{\noindent{\bf #1}}
\newcommand{\Oh}{\mathcal{O}}
\newcommand{\MIX}[2]{\mathsf{MIX}_{#1}(#2)}
\newcommand{\Hit}[2]{\mathsf{H}(#1,#2)}
\newcommand{\Com}[2]{\mathsf{C}(#1,#2)}
\newcommand{\Res}[2]{\mathsf{R}(#1,#2)}
\newcommand{\COV}{\mathsf{COV}}
\newcommand{\BLA}{\mathsf{BLA}}
\newcommand{\UFPP}{\mathsf{UFPP}}
\newcommand{\DFPP}{\mathsf{DFPP}}
\newcommand{\pushpar}[2]{\mathsf{RBA}_{#1}(#2)}
\newcommand{\pushseqmod}[2]{\mathsf{\overline{SEQ}}_{#1}(#2)}
\newcommand{\pushparalg}{\mathsf{RBA}}
\newcommand{\pushseqmodalg}{\mathsf{\overline{SEQ}}}
\newcommand{\Pro}[1]{\mathbf{Pr} \left[\,#1\,\right]}
\newcommand{\Ex}[1]{\mathbf{E} \left[\,#1\,\right]}
\renewcommand{\diam}{\operatorname{diam}}
\newcommand{\ratio}{\mathcal{R}}
\renewcommand{\epsilon}{\varepsilon}
\newcommand{\ie}{i.\,e.\xspace}
\theoremstyle{plain}\newtheorem{claim}[thm]{Claim}
\renewcommand{\epsilon}{\varepsilon}
\date{}
\begin{document}

\title[Cover Time and Broadcast Time]{Cover Time and Broadcast Time}

\author[lab1]{R. Els\"asser}{Robert Els\"asser}
\address[lab1]{Institute for Computer Science, University of Paderborn,
33102 Paderborn, Germany}	
\email{elsa@upb.de}  

\author[lab2]{T. Sauerwald}{Thomas Sauerwald}
\address[lab2]{International Computer Science Institute, 1947 Center
Street, Berkeley, CA 94704, U.S.}  
\email{sauerwal@icsi.berkeley.edu}  

\thanks{This work has been partially
 supported by the \textsf{IST} Program of the European Union under contract number 15964 (\textsf{AEOLUS}),
 by the German Science Foundation (\textsf{DFG}) Research Training Group GK-693 of the Paderborn Institute for Scientific Computation (\textsf{PaSCo})
 and by the German Academic Exchange Service (\textsf{DAAD})} 

\keywords{Random walk, randomized algorithms, parallel and distributed algorithms}
\subjclass{G.3 Probability and Statistics [Probabilistic Algorithms, Stochastic Processes]}

\begin{abstract}

We introduce a new technique for bounding the cover time of random walks by relating it to the runtime of randomized broadcast. In particular, we strongly confirm for dense graphs the intuition of Chandra et al.~\cite{CRRST97} that ``the cover time of the graph is an appropriate metric for the performance of certain kinds of randomized broadcast algorithms''. In more detail, our results are as follows:

\begin{itemize}
\item For any graph $G=(V,E)$ of size $n$ and minimum degree $\delta$, we have $\mathcal{R}(G)= \Oh(\frac{|E|}{\delta} \cdot \log n)$, where $\mathcal{R}(G)$ denotes the quotient of the cover time and broadcast time. This bound is tight for binary trees and tight up to logarithmic factors for many graphs including hypercubes, expanders and lollipop graphs. 

\item For any $\delta$-regular (or almost $\delta$-regular) graph $G$ it holds that $\mathcal{R}(G) = \Omega( \frac{\delta^2}{n} \cdot \frac{1}{\log n})$. Together with our upper bound on $\mathcal{R}(G)$, this lower bound strongly confirms the intuition of Chandra et al.~for graphs with minimum degree $\Theta(n)$, since then the cover time equals the broadcast time multiplied by $n$ (neglecting logarithmic factors).

\item Conversely, for any $\delta$ we construct almost $\delta$-regular graphs that satisfy $\mathcal{R}(G) = \Oh( \max \{ \sqrt{n},\delta \} \cdot \log^2 n)$. Since any regular expander  satisfies $\mathcal{R}(G) = \Theta(n)$, the strong relationship given above does not hold if $\delta$ is polynomially smaller than $n$.
\end{itemize}

Our bounds also demonstrate that the relationship between cover time and broadcast time is much stronger than the known relationships between any of them and the mixing time (or the closely related spectral gap).

\end{abstract}

\maketitle

\section{Introduction}


\textbf{Motivation.} A \emph{random walk} on a graph is the following process. Starting from a specified vertex, the walk proceeds at each step from its current position to an adjacent vertex chosen uniformly at random. The study of random walks has numerous applications in the design and analysis of algorithms (cf.~\cite{Lovasz93random} for a survey). Two of the most important parameters of random walks are its \emph{mixing time} which is the time until the walk becomes close to the stationary distribution, and its \emph{cover time} which is the expected time required for the random walk to visit all vertices.

Famous combinatorial problems solved by rapidly mixing random walks are, e.g., approximating the permanent and approximating the volume of convex bodies~(cf.~\cite{Lovasz93random} for more details). 
The cover time comes naturally into play when the task is to explore a network, or to estimate the stationary distribution of a graph~\cite{WZ96}. Moreover, the cover time is intimately related to combinatorial and algebraic properties such as the conductance and the spectral gap of the underlying graph \cite{BK89} and thus, bounding the cover time may also lead to interesting combinatorial results.

In this paper, we are particularly interested in the relationship between the cover time of random walks and the runtime of \emph{randomized broadcast}~\cite{FPRU90}. Broadcasting in large networks has various fields of application in distributed computing such as the maintenance of replicated databases or the spreading of information in networks \cite{FPRU90,KSSV00}. Furthermore it is closely related to certain mathematical models of epidemic diseases where infections are spread to some neighbours chosen uniformly at random with some probability. However, in most papers, spreaders are only active in a given time frame, and the question of interest is, whether on certain networks an epidemic outbreak occurs~\cite{KK27,Ne02}. Several threshold theorems involving the basic reproduction number, contact number, and the replacement number have been stated (see~\cite{He00} for a collection of results).

Here, we consider the so-called \emph{randomized broadcast algorithm}~\cite{FPRU90} (also known as \emph{push algorithm}): at the beginning, a vertex $s$ in a graph $G$ knows of some rumor which has to be disseminated to all other vertices. Then, at each time-step every vertex that knows of the rumor chooses one of its neighbors uniformly at random and informs it of the rumor. The advantage of \emph{randomized} broadcast is in its inherent robustness against several kinds of failures (e.g., \cite{FPRU90}) and dynamical changes compared to deterministic schemes that either need substantially more time or can tolerate only a relatively small number of faults \cite{KSSV00}.

\textbf{Related Work.} There is a vast body of literature devoted to the cover time of random walks and we can only point to some results directly related to this paper. Aleliunas et al.~\cite{AKLLR79} initiated the study of the cover time. Amongst other results, they proved that the cover time of any graph $G=(V,E)$ with $n$ vertices is at most $\Oh(n \cdot |E|)$. To obtain this result they proved that the cover time is bounded by the weight of a spanning tree whose edges are weighted according to the commute times between the corresponding vertices. This approach was later refined by Feige~\cite{Fe97} to obtain an upper bound of less than $2 n^2$ for regular graphs. While the spanning tree technique is particularly useful for graphs that have a high cover time~\cite{Fe97}, it vastly overestimates the cover time of e.g., complete graphs.

The seminal work of Chandra et al.~\cite{CRRST97} established a close connection between the \emph{electrical resistance} of a graph and its cover time. This correspondence allows the application of elegant methods from electrical network theory, e.g., the use of short-cut-principles or certain flow-based arguments. Nevertheless, for the computation of the resistance of a given graph other graph-theoretical parameters are often required, e.g., vertex-expansion, number of vertex-disjoint paths or the number of vertices within a certain distance~\cite{CRRST97}.

A wide range of techniques to upper bound the cover time is based on the mixing time of a random walk or the closely related spectral gap. The technique of reducing the cover time to the coupon collector's problem on graphs with low mixing time traces back to Aldous~\cite{Al83} who derived tight bounds on the cover time of certain Cayley graphs. Later, Cooper and Frieze extended this technique to bound the cover time of several classes of random graphs, e.g., \cite{CF05}. The basic idea of this method is that after each mixing time steps, the random walk visits an (almost) randomly chosen vertex. The crux is to deal with the dependencies among the intermediate vertices. Hence, in addition to an upper bound on the mixing time of logarithmic~\cite{CF05} or at least sub-polynomial order~\cite{Al83}, one has to bound the number of returns to the starting vertex within mixing time steps.

A related result was derived by Broder and Karlin~\cite{BK89} who bounded the cover time in terms of the \emph{spectral gap} $1-\lambda_2$, where $\lambda_2$ is the second largest eigenvalue of the transition matrix of the random walk. 

Winkler and Zuckerman~\cite{WZ96} introduced an interesting parameter called \emph{blanket time} which is closely related to the cover time. Here, one asks for the first time-step at which
the observed distribution of the visited vertices approximates the stationary distribution up to a constant factor. Winkler and Zuckerman conjectured that the blanket time is asymptotically the same as the cover time. In~\cite{KKLV00} Kahn et al.~showed that the blanket time is upper bounded by the cover time multiplied by $\Oh((\ln \ln n)^2)$ for any graph.

Most papers dealing with randomized broadcast analyze the runtime on different graph classes. Pittel~\cite{Pi87} proved that the runtime on complete graphs is $\log_2 n + \ln n \pm \Oh(1)$. Feige et al.~\cite{FPRU90}~derived several upper bounds, in particular a bound of $\Oh(\log n)$ for hypercubes and random graphs. We extended the bound of $\Oh(\log n)$ to a certain class of Cayley graphs in~\cite{ES07}. Additionally, we proved that the broadcast time is upper bounded by the sum of the mixing time and an additional logarithmic factor~\cite{S07} (a similar result for a related broadcast algorithm was derived by Boyd et al.~\cite{BGPS06}). However, the mixing time cannot be used for an appropriate lower bound on the broadcast time, as it may overestimate the broadcast time up to a factor of $n$ on certain graphs (cf. Section~\ref{subs:discussion}).

\begin{floatingfigure}[r]{6.8cm} \center{ \hspace{-1.5cm}
\begin{picture}(0,0)%
\includegraphics{figure.pstex}%
\end{picture}%
\setlength{\unitlength}{3947sp}%
\begingroup\makeatletter\ifx\SetFigFont\undefined%
\gdef\SetFigFont#1#2#3#4#5{%
  \reset@font\fontsize{#1}{#2pt}%
  \fontfamily{#3}\fontseries{#4}\fontshape{#5}%
  \selectfont}%
\fi\endgroup%
\begin{picture}(2996,2621)(-648,-2917)
\put(377,-546){\makebox(0,0)[lb]{\smash{{\SetFigFont{9}{10.8}{\sfdefault}{\mddefault}{\updefault}{\color[rgb]{0,0,.82}$\max_{G \in \mathcal{G}_{\delta}}  \ratio(G) $}%
}}}}
\put(-273,-432){\makebox(0,0)[b]{\smash{{\SetFigFont{9}{10.8}{\familydefault}{\mddefault}{\updefault}{\color[rgb]{0,0,0}$f(\delta)$}%
}}}}
\put(436,-2872){\makebox(0,0)[b]{\smash{{\SetFigFont{9}{10.8}{\sfdefault}{\mddefault}{\updefault}{\color[rgb]{0,0,0}$n^{1/4}$}%
}}}}
\put(908,-2872){\makebox(0,0)[b]{\smash{{\SetFigFont{9}{10.8}{\familydefault}{\mddefault}{\updefault}{\color[rgb]{0,0,0}$n^{1/2}$}%
}}}}
\put(1380,-2872){\makebox(0,0)[b]{\smash{{\SetFigFont{9}{10.8}{\sfdefault}{\mddefault}{\updefault}{\color[rgb]{0,0,0}$n^{3/4}$}%
}}}}
\put(1854,-2872){\makebox(0,0)[b]{\smash{{\SetFigFont{9}{10.8}{\familydefault}{\mddefault}{\updefault}{\color[rgb]{0,0,0}$n$}%
}}}}
\put(-302,-2191){\makebox(0,0)[b]{\smash{{\SetFigFont{9}{10.8}{\sfdefault}{\mddefault}{\updefault}{\color[rgb]{0,0,0}$n^{1/4}$}%
}}}}
\put(-302,-1283){\makebox(0,0)[b]{\smash{{\SetFigFont{9}{10.8}{\sfdefault}{\mddefault}{\updefault}{\color[rgb]{0,0,0}$n^{3/4}$}%
}}}}
\put(-302,-829){\makebox(0,0)[b]{\smash{{\SetFigFont{9}{10.8}{\familydefault}{\mddefault}{\updefault}{\color[rgb]{0,0,0}$n$}%
}}}}
\put(-229,-2766){\makebox(0,0)[b]{\smash{{\SetFigFont{9}{10.8}{\familydefault}{\mddefault}{\updefault}{\color[rgb]{0,0,0}$\Oh(1)$}%
}}}}
\put(-302,-1737){\makebox(0,0)[b]{\smash{{\SetFigFont{9}{10.8}{\familydefault}{\mddefault}{\updefault}{\color[rgb]{0,0,0}$n^{1/2}$}%
}}}}
\put(2248,-2822){\makebox(0,0)[b]{\smash{{\SetFigFont{9}{10.8}{\familydefault}{\mddefault}{\updefault}{\color[rgb]{0,0,0}$\delta$}%
}}}}
\put(1902,-2242){\makebox(0,0)[b]{\smash{{\SetFigFont{9}{10.8}{\familydefault}{\mddefault}{\updefault}{\color[rgb]{1,0,0}$\min_{G \in \mathcal{G}_{\delta}}  \ratio(G) $}%
}}}}
\end{picture}%
 
\caption{All bounds on $\mathcal{R}(G)$ at a glance. $\mathcal{G}_{\delta}$ denotes the class of graphs with $\Delta=\Oh(\delta)$. The blue and red polygons indicate the gap between our lower and upper bounds on $\max_{G \in \mathcal{G}_{\delta}} \mathcal{R}(G)$ and $\min_{G \in \mathcal{G}_{\delta}} \mathcal{R}(G)$, resp.}\label{fig:regular}}
\end{floatingfigure}
\textbf{Our Results.} We present the first formal results relating the cover time to the broadcast time. In most of them, we will assume that the broadcast and the random walk both start from its respective worst-case initial vertex.
Note that at a first look these processes seem not to be too closely related, since randomized broadcast is a \emph{parallel} process where propagation occurs at \emph{every} informed vertex simultaneously, while a random walk moves "only" from \emph{one} vertex to another \cite{FPRU90}. Nevertheless, Chandra et al.~\cite{CRRST97} mentioned that ``The cover time of the graph is an appropriate metric for the performance of certain kinds of randomized broadcast algorithms''. 
As a consequence of our main results, we obtain a fairly tight characterization of graph classes for which the cover time and broadcast time capture each other. On the positive side, for every graph with minimum degree $\Theta(n)$, the cover time equals the broadcast time multiplied by $n$, up to logarithmic factors (this kind of tightness (up to logarithmic factors) has been frequently considered in the study of random walks, e.g., when studying rapidly mixing Markov chains~\cite{Si92}, or when bounding the cover time~\cite{CRRST97},\cite[Theorem~2.7]{Lovasz93random}.). On the negative side, this strong correspondence does not hold on almost regular graphs, when the degree is substantially smaller than $n$.

In more detail, our results illustrated in Figure~\ref{fig:regular} are as follows. First, we prove that the cover time of any graph with minimum degree $\delta$ is at most $\Oh(\frac{|E|}{\delta} \log n)$ multiplied by the (expected) broadcast time, that is, the quotient $\mathcal{R}(G)$ of the cover time and broadcast time is $\Oh(\frac{|E|}{\delta} \log n)$. This bound is tight up to a constant factor for binary trees and tight up to a logarithmic factor for various graphs including, e.g., expanders, hypercubes and lollipop-graphs. As an application, we use this result to upper bound the cover time of generalized random graphs that are used as a model for real world networks~\cite{CLV03}.

Conversely, we consider the question of lower bounding $\ratio(G)$. By showing that the commute time between two vertices $u,v$ is at least $2 \cdot \dist(u,v)^2$, we obtain that $\ratio(G) = \Omega( \frac{\sqrt{n \log n}}{\Delta})$ for any graph with maximum degree $\Delta$. For constant $\Delta$, this bound is tight for the two-dimensional $\sqrt{n} \times \sqrt{n}$-torus up to logarithmic factors. We move on to improve this bound for denser graphs with $\Delta=\Oh(\delta)$ to $\ratio(G) = \Omega(\frac{\sqrt{n}}{\sqrt{\delta} \log n})$. More importantly, for any graph with $\Delta=\Oh(\delta)$ we establish that $\ratio(G) = \Omega( \frac{\delta^2}{n} \cdot \frac{1}{\log n})$. Together with our upper bound on $\mathcal{R}(G)$, this implies that on any graph with $\delta=\Theta(n)$, cover time and broadcast time (multiplied by $n$) capture each other up to logarithmic factors. 

We complement these positive results by the construction of (almost) $d$-regular graphs for which $\mathcal{R}(G) = \Oh( \max \{ \sqrt{n}, d \} \cdot \log n)$. Since for any $d$-regular expander (graphs for which the spectral gap satisfies $(1-\lambda_2)^{-1}=\Oh(1)$), $\mathcal{R}(G) = \Theta(n)$, the cover time does \emph{not} always capture the performance of randomized broadcast for the class of almost $d$-regular graphs when $d$ is polynomially smaller than $n$.

All of our lower and upper bounds reveal a surprisingly close relationship between the cover time and broadcast time. In particular, upper bounding the cover time in terms of the broadcast time turns out to be as good as (and in some cases much better than) bounding it in terms of the spectral-gap (cf.~Section~\ref{subs:discussion}). From another perspective, we derive a lower bound on the broadcast time in terms of the cover time that nicely complements the existing upper bounds on the broadcast time based on the mixing time~\cite{ES07,S07}. A further novel feature of this work is the use of techniques from electrical network theory to bound the broadcast time. We should note that certain difficulties in applying such methods for the study of randomized broadcast have been mentioned by Feige et al.~\cite{FPRU90}.

%


\section{Notations, Definitions and Preliminaries}\label{sec:randomwalk} 

Throughout this paper, let $G=(V,E)$ be an undirected, simple and connected graph of size $n=|V|$. By $\delta$ and $\Delta$ we denote the minimum and maximum degree of $G$, respectively. For some set $X \subseteq V$, $N(X)$ denotes the set of all neighbors of $x \in X$, and $\deg_{X}(u)$ is the number of edges between $u$ 
and the vertices of $X$.

\subs{Random Walk.} A \emph{random walk} \cite{Lovasz93random} on a graph $G$ starts at a specified vertex $s \in V$ and moves in each step to a neighboring vertex chosen uniformly at random. This can be described by a \emph{transition matrix} $\mathbf{P}$, where $p_{ij} = 1 /\deg(i)$ if $\{i,j\} \in E(G)$, and $p_{ij}=0$ otherwise. Then, the random walk is an infinite sequence of vertices $X_0,X_1,\ldots$, where $X_0:=s$ is the starting point of this random walk, and $X_t$ denotes the vertex visited by the random walk at step $t$. Note that $X_t$ is a random variable with a distribution $\mathbf{p}_s(t)$ on $V(G)$. Denoting by $\mathbf{p}_s(0)$ the unit-vector (regarded as column vector) with $1$ at the component corresponding to $s$ and $0$ otherwise, we obtain the iteration $  \mathbf{p}_s(t+1) =  \mathbf{p}_s(t) \cdot \mathbf{P}$ for every step $t \in \mathbb{N}$. It is well-known that on non-bipartite graphs, $\mathbf{p}_s(t)$ converges for $t\rightarrow \infty$ towards the \emph{stationary distribution} vector $\pi$ given by $\pi(v)= \deg(v) / (2|E|)$. For simplicity, we confine ourselves to non-bipartite graphs in the following. This causes no loss of generality as for general graphs (including bipartite ones) convergence can be ensured easily by using the transition matrix $\frac{1}{2} \mathbf{I} + \frac{1}{2} \mathbf{P}$ (with $\mathbf{I}$ being the identity matrix) instead of $\mathbf{P}$. This change of the transition matrix slows down the mixing time (and the cover time) only by some constant factor~\cite{Lovasz93random,Si92}. 

\subs{Mixing Time and Spectral Gap.} The \emph{mixing time} of a random walk on $G$ is $\MIX{\epsilon}{G} := \max_{s \in V} \min\{ t \in \N :~ \|\mathbf{p}_{s}(t) - \pi \|_{1} \leq \epsilon, X_0=s \}. $
Since $G$ is connected and non-bipartite, the eigenvalues of $\mathbf{P}$ satisfy $\lambda_1 = 1 > \lambda_2 \geq \cdots \geq \lambda_n > - 1$. The following result by Sinclair shows that the spectral gap $1-\lambda_{2}$ captures the mixing time up to logarithmic factors.
\begin{theorem}[\cite{Si92}]\label{thm:sinclairbound}
For any graph $G=(V,E)$ and $\epsilon > 0$,
\[
 \Omega \left(\frac{\lambda_{2}}{1-\lambda_{2}} \cdot \log \bigl(\frac{1}{\epsilon}\bigr) \right) =
\MIX{\epsilon}{G} = \Oh \left(\frac{1}{1-\lambda_{2}} \cdot \left(\log n + \log \bigl(\frac{1}{\epsilon}\bigr) \right) \right). \]
\end{theorem}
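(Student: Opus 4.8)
The plan is to exploit reversibility of the random walk. Since $\pi(i)\,p_{ij} = \pi(j)\,p_{ji}$, the matrix $\mathbf{P}$ is self-adjoint for the inner product $\langle f,g\rangle_\pi = \sum_v \pi(v) f(v) g(v)$, so it has a real spectral decomposition with eigenvalues $1=\lambda_1 > \lambda_2 \geq \cdots \geq \lambda_n$. Concretely I would symmetrize by setting $\mathbf{M} = \mathbf{D}^{1/2}\mathbf{P}\mathbf{D}^{-1/2}$ with $\mathbf{D}=\operatorname{diag}(\pi(v))$; then $\mathbf{M}$ is a genuine symmetric matrix with the same eigenvalues $\lambda_i$ and an orthonormal eigenbasis $\psi_1,\dots,\psi_n$, where $\psi_1(v)=\sqrt{\pi(v)}$. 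Writing $\mathbf{P}^t = \mathbf{D}^{-1/2}\mathbf{M}^t\mathbf{D}^{1/2}$ and expanding in this basis yields the workhorse identity
\[
\mathbf{p}_s(t)(v) - \pi(v) = \sqrt{\tfrac{\pi(v)}{\pi(s)}} \sum_{i\geq 2} \lambda_i^{t}\,\psi_i(s)\,\psi_i(v).
\]
Throughout I would use the laziness convention already fixed in the preliminaries (replacing $\mathbf{P}$ by $\tfrac12\mathbf{I}+\tfrac12\mathbf{P}$), so that all eigenvalues are nonnegative and the spectral radius on the space orthogonal to $\pi$ is exactly $\lambda_2$; this lets me ignore $\lambda_n$ and avoids the near-bipartite degeneracy at only a constant-factor cost.

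For the upper bound I would pass from the $L_1$ distance to a $\pi$-weighted $L_2$ (i.e.\ $\chi^2$) distance via Cauchy--Schwarz: $\|\mathbf{p}_s(t)-\pi\|_1 = \sum_v \pi(v)\,\big|\tfrac{\mathbf{p}_s(t)(v)}{\pi(v)} - 1\big| \leq \big(\sum_v \tfrac{(\mathbf{p}_s(t)(v)-\pi(v))^2}{\pi(v)}\big)^{1/2}$. Squaring the identity above, summing over $v$, and using orthonormality of the eigenbasis collapses the sum to $\pi(s)^{-1}\sum_{i\geq 2}\lambda_i^{2t}\psi_i(s)^2 \leq \pi(s)^{-1}\lambda_2^{2t}$, where I used $\sum_i \psi_i(s)^2 = 1$. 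Hence $\|\mathbf{p}_s(t)-\pi\|_1 \leq \pi(s)^{-1/2}\lambda_2^{t}$, and since $\pi(s) = \deg(s)/(2|E|) \geq 1/n^2$ the prefactor is at most $n$. Imposing $n\,\lambda_2^t \leq \epsilon$ and invoking the elementary bound $\log(1/\lambda_2) \geq 1-\lambda_2$ gives $\MIX{\epsilon}{G} \leq \frac{\log(n/\epsilon)}{1-\lambda_2}$, which is the claimed $\Oh\big(\frac{1}{1-\lambda_2}(\log n + \log(1/\epsilon))\big)$.

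For the lower bound I would test the walk against the eigenfunction $\phi_2$ of $\mathbf{P}$ for $\lambda_2$ (the right eigenvector, which satisfies $\sum_v \pi(v)\phi_2(v)=0$). Since $\mathbf{P}^t\phi_2 = \lambda_2^t\phi_2$, the expectation of $\phi_2$ under $\mathbf{p}_s(t)$ equals $\lambda_2^t\phi_2(s)$, while under $\pi$ it is $0$; thus $\lambda_2^t|\phi_2(s)| = |\langle \mathbf{p}_s(t)-\pi,\phi_2\rangle| \leq \|\mathbf{p}_s(t)-\pi\|_1\,\|\phi_2\|_\infty$ by $\ell_1$--$\ell_\infty$ duality. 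Choosing the start vertex $s$ at which $|\phi_2|$ is maximized cancels $\phi_2$ and leaves $\lambda_2^t \leq \|\mathbf{p}_s(t)-\pi\|_1$; so whenever the latter is $\leq \epsilon$ we must have $t \geq \log(1/\epsilon)/\log(1/\lambda_2)$, and the elementary inequality $\log(1/\lambda_2) \leq (1-\lambda_2)/\lambda_2$ converts this into $\MIX{\epsilon}{G} = \Omega\big(\frac{\lambda_2}{1-\lambda_2}\log(1/\epsilon)\big)$.

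I expect the main obstacle to be bookkeeping around the bottom of the spectrum rather than the top: justifying the clean reduction to $\lambda_2$ requires ensuring that $|\lambda_n|$ does not dominate the $L_2$ decay, which is precisely why the laziness normalization is invoked. A secondary subtlety is the $\pi(s)^{-1/2}$ prefactor in the upper bound, which must be controlled uniformly (this is where $\deg(s)\geq 1$, and more generally the minimum degree, enters); one should verify that its logarithm is genuinely absorbed into the additive $\log n$ term rather than multiplying the spectral-gap factor.
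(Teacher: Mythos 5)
Your proposal is correct and follows essentially the same route as the source the paper cites for this theorem (the paper itself imports it from Sinclair~\cite{Si92} without proof): the symmetrization $\mathbf{D}^{1/2}\mathbf{P}\mathbf{D}^{-1/2}$, the $\chi^2$/Cauchy--Schwarz upper bound with the $\pi(s)^{-1/2}\leq n$ prefactor absorbed into the additive $\log n$ term, and the eigenfunction test at the maximizer of $|\phi_2|$ for the lower bound are exactly the standard spectral argument behind this statement. You also correctly identify the one genuine subtlety --- that the clean $1-\lambda_2$ form requires the laziness convention (or a bound on $|\lambda_n|$) already fixed in the paper's preliminaries --- so nothing is missing.
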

%
%

\subs{Commute Time, Resistance and Cover Time.} For two vertices $u,v \in V(G)$, the \emph{hitting time} from $u$ to $v$ is defined as $\Hit{u}{v} := \Ex{\min \{t \in \N \backslash \{0\}: X_t = v, X_0 =u \}}$, \ie, the expected number of steps to reach $v$ from $u$. The \emph{commute time} $\Com{u}{v}$ is defined as the expected number of steps to reach $v$ when starting from $u$ and then returning back to $u$, so, $\Com{u}{v} := \Hit{u}{v} + \Hit{v}{u}$. 
Consider now the graph $G$ as an electrical network where each edge represents a unit resistance. Let $u$ and $v$ be two vertices. Assume that one ampere were injected into vertex $u$ and removed from vertex $v$. Then $\Res{u}{v}$ is the voltage difference between $u$ and $v$ (for more details on electrical networks we refer the reader to \cite{CRRST97,Lovasz93random}), and is related to $\Com{u}{v}$ as follows.

\begin{theorem}[\cite{CRRST97}]\label{thm:commuteresistance}
For any pair of vertices $u,v \in V$, $ \Com{u}{v} = 2 |E| \cdot \Res{u}{v}. $
\end{theorem}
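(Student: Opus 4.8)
The plan is to establish the identity through the classical correspondence between random walks and electrical networks, reducing everything to the uniqueness of solutions of a harmonic system and to the linearity (superposition) of current flows.

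First I would set up the governing equation for hitting times. Fixing the target $v$, the function $\phi(x) := \Hit{x}{v}$ satisfies $\phi(v)=0$ together with the recurrence $\phi(x) = 1 + \frac{1}{\deg(x)}\sum_{y \sim x} \phi(y)$ for every $x \neq v$, obtained by conditioning on the first step of the walk (the $+1$ counts that step, and from $x$ the walk moves to a uniformly random neighbor). The central observation is that a suitable voltage assignment obeys exactly the same system. Concretely, consider the electrical network in which one injects $\deg(x)$ units of current into every vertex $x \neq v$ and extracts the total $\sum_{x \neq v}\deg(x) = 2|E| - \deg(v)$ at $v$; let $\psi$ denote the resulting potential, normalized so that $\psi(v)=0$. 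Ohm's law (unit resistances make the current along an edge equal to the potential difference across it) together with Kirchhoff's current law at each $x \neq v$ gives $\sum_{y \sim x}\bigl(\psi(x)-\psi(y)\bigr) = \deg(x)$, which rearranges into precisely the recurrence above. Since $G$ is connected, this linear system has a unique solution with the prescribed value at $v$, so $\psi(x) = \Hit{x}{v}$ for every $x$; in particular $\psi(u) = \Hit{u}{v}$.

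Next I would repeat this argument with the roles of $u$ and $v$ interchanged, obtaining a potential $\psi'$, normalized by $\psi'(u)=0$, for the flow that injects $\deg(x)$ at every $x \neq u$ and extracts $2|E| - \deg(u)$ at $u$. The same uniqueness reasoning yields $\psi'(x) = \Hit{x}{u}$ for every $x$, and hence $\psi'(v) = \Hit{v}{u}$.

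The decisive step is superposition. Subtracting the second flow from the first and tallying the net injected current at each vertex, one finds that all vertices outside $\{u,v\}$ cancel, while $u$ receives a net $+2|E|$ and $v$ a net $-2|E|$; thus $\psi - \psi'$ is exactly the potential of $2|E|$ times the unit current flow from $u$ to $v$, so its $u$--$v$ potential difference equals $2|E| \cdot \Res{u}{v}$. Computing that same difference directly from the boundary values, $\bigl(\psi(u)-\psi'(u)\bigr) - \bigl(\psi(v)-\psi'(v)\bigr) = \Hit{u}{v} + \Hit{v}{u} = \Com{u}{v}$, and equating the two expressions gives the claim. I expect the main obstacle to be the careful bookkeeping of injected versus extracted currents in this superposition, along with the attendant sign conventions, together with the clean justification of uniqueness for the harmonic system; both are standard but easy to botch.
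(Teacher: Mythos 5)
Your proof is correct: the paper imports this theorem from Chandra et al.~\cite{CRRST97} without reproving it, and your argument --- identifying $\Hit{x}{v}$ with the potential of the flow injecting $\deg(x)$ at every $x \neq v$, then superposing the two flows so that all injections outside $\{u,v\}$ cancel and a net $2|E|$ flows from $u$ to $v$ --- is precisely the classical proof from that reference, with the current bookkeeping and signs handled correctly. Nothing further is needed.
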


We will mainly be concerned with the \emph{cover time}, which is the expected number of steps a random walk takes to visit all vertices of $G$. Denote by $\COV(s)$ this time for a random walk which starts from $s$, and let $\COV(G):= \max_{s \in V} \COV(s)$. The cover time is related to the maximum commute time by means of $ \frac{1}{2} \cdot \max_{u,v \in V} \Com{u}{v} \leq \COV(G) \leq e^{3} \cdot \max_{u,v \in V} \Com{u}{v} \ln n + n$~\cite{CRRST97}. We restate the following bounds by Feige.
\begin{theorem}[\cite{Fe95b,Fe95}]\label{thm:feigelower}\label{thm:feigeupper}
For any graph, $(1-o(1)) \cdot n \ln n \leq \COV(G) \leq (\frac{4}{27}+o(1)) \cdot n^3$.
\end{theorem}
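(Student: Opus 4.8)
The lower and upper bounds are extremal for completely different graphs---the complete graph for $(1-o(1))\,n\ln n$, and a clique with a pendant path for $(\tfrac{4}{27}+o(1))\,n^3$---so I would prove them by separate arguments, in both cases passing through the electrical identity $\Com{u}{v}=2|E|\cdot\Res{u}{v}$ of Theorem~\ref{thm:commuteresistance} and the sandwich $\tfrac{1}{2}\max_{u,v}\Com{u}{v}\le\COV(G)\le e^3\max_{u,v}\Com{u}{v}\ln n+n$. A useful preliminary reduction for the lower bound is that $\COV(G)=\max_s\COV(s)\ge\sum_s\pi(s)\COV(s)$, so it suffices to lower bound the expected cover time of a walk started from the stationary distribution $\pi$.

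\emph{Lower bound.} Since $\COV(G)$ equals the expected time at which the \emph{last} vertex is first visited, I would write $\COV(G)\ge\Ex{\max_{v}T_v}$, where $T_v$ is the hitting time of $v$ from a stationary start. The plan is a coupon-collector argument on the tails of the $T_v$. For a walk that mixes quickly, hitting a fixed low-probability vertex $v$ is well approximated by a Poisson process, so $\Pro{T_v>t}\approx e^{-r_v t}$ with rate $r_v:=1/\Hit{\pi}{v}$; since $\Hit{\pi}{v}\ge(1-o(1))/\pi(v)$ and $\sum_v\pi(v)=1$, the key estimate is $\sum_v r_v\le 1+o(1)$. Treating the $T_v$ as (nearly) independent $\Exp(r_v)$ variables, $\Ex{\max_v T_v}$ is minimised subject to $\sum_v r_v\le1$ exactly when all rates equal $1/n$, and then the maximum of $n$ independent $\Exp(1/n)$ variables is $(1-o(1))\,n\ln n$. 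On the complete graph this is an equality, since covering $K_n$ is literally coupon collecting, which is why the constant is sharp. The main obstacle is making the ``independent exponential tails'' rigorous: one must control the correlations among the $T_v$ and the error in the Poisson approximation, which forces a case split---the approximation is valid when the mixing time is $o(n)$, whereas a slowly mixing graph has a bottleneck across which some commute time is $\omega(n\ln n)$, so that $\COV(G)\ge\tfrac{1}{2}\max_{u,v}\Com{u}{v}$ already beats the target. Reconciling these two regimes is where the real work lies.

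\emph{Upper bound.} Here I would start from the spanning-tree estimate of Aleliunas et al.~\cite{AKLLR79}: traversing a spanning tree $T$ and summing commute times across its edges gives $\COV(G)\le\sum_{\{u,v\}\in T}\Com{u}{v}=2|E|\sum_{\{u,v\}\in T}\Res{u}{v}\le2|E|(n-1)$, because a tree edge has resistance at most $1$. This already yields $\COV(G)\le(1+o(1))\,n^3$, but with constant $1$ instead of $\tfrac{4}{27}$. The slack comes from the fact that $|E|$ and the ``length'' supplied by the tree cannot be simultaneously maximal: making $|E|\approx\binom{n}{2}$ forces a small diameter, whereas forcing a large tree contribution forces sparsity. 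To capture the trade-off I would split $V$ into a dense core of $m$ vertices carrying almost all of the $\binom{m}{2}$ edges and a pendant path on the remaining $n-m$ vertices; the commute time from the core to the far end is then about $2\binom{m}{2}(n-m)\approx m^2(n-m)$. Maximising $m^2(n-m)$ gives $m=\tfrac{2}{3}n$ and value $\tfrac{4}{27}n^3$, matched by the lollipop graph, which shows the constant is best possible.

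The genuine difficulty for the upper bound is to prove that this clique-plus-path shape is extremal for \emph{every} graph, not merely a good example. For that I would replace the crude bound $\sum_{\{u,v\}\in T}\Res{u}{v}\le n-1$ by a sharper accounting that records how many vertices sit at each level of effective resistance from the core, bound the cover time by a sum of hitting times $\Hit{s}{v}\le\Com{s}{v}=2|E|\cdot\Res{s}{v}$ taken in the order the walk first discovers vertices, and then show by a convexity/rearrangement argument that, for fixed $n$, this sum is largest precisely for the clique-path distribution of vertices over resistance levels. The optimisation $\max_m m^2(n-m)=\tfrac{4}{27}n^3$ then re-emerges as the tight case. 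The $o(\cdot)$ terms in both bounds absorb the mixing-time corrections and the lower-order edge counts.
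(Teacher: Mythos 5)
This theorem is imported verbatim from Feige \cite{Fe95b,Fe95}; the paper contains no proof of it, so your attempt has to be measured against Feige's original arguments, and both halves of your plan stop short of them in identifiable places. For the lower bound, the decisive gap is that your two regimes do not cover all graphs. The Poisson/independent-exponential picture can be controlled only when the relaxation time is much smaller than $n$, and even there the correlations among the hitting times $T_v$ are the whole problem, not a footnote: you give no mechanism (second moment, conditioning, negative-association) to justify treating them as independent. In the complementary regime you assert that slow mixing forces $\max_{u,v}\Com{u}{v}=\omega(n\ln n)$, but the only general implication available runs through the relaxation time: mixing time $\Omega(n)$ gives $t_{\mathrm{rel}}=\Omega(n/\log n)$ and hence only $\max_{u,v}\Com{u}{v}=\Omega(n/\log n)$, a $\log^{2}n$ factor below your target, so graphs with mixing time around $n$ fall through both cases. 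There is also a quantitative leak in your key estimate $\sum_v r_v\le 1+o(1)$: the spectral formula for $\Hit{\pi}{v}$ involves factors $1/(1-\lambda_k)$, and negative eigenvalues only guarantee $1/(1-\lambda_k)\ge 1/2$, so without extra work you get $\sum_v r_v\le 2+o(1)$ and a bound of $(\tfrac12-o(1))\,n\ln n$; passing to the lazy walk to fix the signs costs another factor $2$. Since the constant $1$ is the entire content of the statement (an $\Omega(n\log n)$ bound with an unspecified constant is far easier), this loss is fatal as the plan stands. Feige's actual proof is a paper-length argument that partitions vertices according to $\Hit{\pi}{v}$ and handles the correlations directly.

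For the upper bound you correctly name the extremal shape and the optimization $\max_m m^2(n-m)=\tfrac{4}{27}n^3$, but the step that is supposed to carry the proof is missing. The known easy ingredients are Brightwell--Winkler's bound $\max_{u,v}\Hit{u}{v}\le(\tfrac{4}{27}+o(1))\,n^3$ together with Matthews' method, which yields $\COV(G)\le(\tfrac{4}{27}+o(1))\,n^3\ln n$; the entire content of Feige's upper-bound paper is the removal of that $\ln n$. Your proposal---sum hitting times $\Hit{s}{v}\le 2|E|\cdot\Res{s}{v}$ in the order the walk discovers vertices and invoke a convexity/rearrangement argument over resistance levels---does not achieve this: the discovery order is itself random and cannot be fixed in advance (averaging over orderings is exactly Matthews' argument, and is where the $\ln n$ enters), and a sum of up to $n$ terms each potentially of order $n^3$ gives $n^4$ unless one proves that only $O(1)$ vertices can sit at resistance level $\Theta(n)$ while the rest are geometrically cheaper. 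That structural fact is the theorem, not a lemma: Feige proves a dichotomy showing that a graph whose cover time is within a constant factor of $n^3$ must essentially consist of a dense core with a path-like appendage, and then covers core and appendage by separate arguments arranged so that the coupon-collector logarithm multiplies only lower-order terms. Your spanning-tree opening (the Aleliunas et al.\ bound $\COV(G)\le 2|E|(n-1)$, giving constant $1$) is fine as far as it goes, but the passage from constant $1$ to $\tfrac{4}{27}$ is precisely the unproved crux.
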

A corresponding result to Theorem~\ref{thm:sinclairbound} for $\COV(G)$ was given by Broder and Karlin. 
\begin{theorem}[\cite{BK89}]\label{thm:coverspectral} 
For any regular graph $G=(V,E)$,
 $
   \COV(G) = \Oh( \frac{1}{1- \lambda_2} \cdot n \log n).
 $
\end{theorem}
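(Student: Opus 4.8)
The plan is to bound the cover time through the maximum commute time, translate commute time into effective resistance, and then control the resistance spectrally. The crucial feature I will exploit is that for a $d$-regular graph the transition matrix $\mathbf{P} = \tfrac{1}{d}A$ (with $A$ the adjacency matrix) is symmetric, so it admits an orthonormal eigenbasis $\phi_1,\dots,\phi_n$ with eigenvalues $\lambda_1=1>\lambda_2\geq\cdots\geq\lambda_n$, and moreover $2|E|=nd$.

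First I would invoke the Matthews-type bound stated above (from \cite{CRRST97}), namely $\COV(G) \leq e^3 \cdot \max_{u,v\in V}\Com{u}{v}\cdot \ln n + n$. It therefore suffices to prove $\max_{u,v\in V}\Com{u}{v} = \Oh\bigl(n/(1-\lambda_2)\bigr)$. By Theorem~\ref{thm:commuteresistance} and regularity, $\Com{u}{v} = 2|E|\cdot\Res{u}{v} = nd\cdot\Res{u}{v}$, so the goal reduces to the uniform resistance bound $\Res{u}{v} \leq 2/\bigl(d(1-\lambda_2)\bigr)$.

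Second, I would derive a spectral formula for the effective resistance. The combinatorial Laplacian of a $d$-regular graph is $L = dI - A = d(I-\mathbf{P})$, so $L$ shares the eigenbasis $\phi_1,\dots,\phi_n$ and has eigenvalues $d(1-\lambda_i)$, with the single zero eigenvalue $d(1-\lambda_1)=0$ corresponding to the all-ones direction. Writing $L^{+}=\sum_{i=2}^{n} \frac{1}{d(1-\lambda_i)}\phi_i\phi_i^{\top}$ for the pseudoinverse and using the standard identity $\Res{u}{v} = (e_u-e_v)^{\top}L^{+}(e_u-e_v)$, I obtain
\[
\Res{u}{v} = \sum_{i=2}^{n} \frac{\bigl(\phi_i(u)-\phi_i(v)\bigr)^2}{d(1-\lambda_i)}.
\]
Bounding every denominator below by $d(1-\lambda_2)$ and then extending the sum harmlessly to $i=1$ (the added term vanishes since $\phi_1$ is constant) gives $\Res{u}{v} \leq \frac{1}{d(1-\lambda_2)}\sum_{i=1}^{n}\bigl(\phi_i(u)-\phi_i(v)\bigr)^2$. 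Since $\{\phi_i\}$ is orthonormal, Parseval's identity yields $\sum_{i=1}^{n}\phi_i(u)^2 = \|e_u\|^2 = 1$, similarly for $v$, and $\sum_{i=1}^{n}\phi_i(u)\phi_i(v) = \langle e_u, e_v\rangle = 0$ for $u\neq v$; hence the sum equals $2$ and $\Res{u}{v} \leq 2/\bigl(d(1-\lambda_2)\bigr)$, as required. Chaining the three steps produces $\COV(G) = \Oh\bigl(n\log n/(1-\lambda_2)\bigr)$.

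I do not expect a serious obstacle here: the argument is short once the spectral representation of the resistance is in place. The one step that genuinely uses the hypotheses is the second one --- regularity is what makes $\mathbf{P}$ symmetric (so that the $\phi_i$ can be chosen orthonormal, which is exactly what lets Parseval collapse the sum to the constant $2$) and what makes $2|E| = nd$ cancel the stray factor of $d$ from the resistance bound. For a general (non-regular) graph both of these simplifications fail, which is presumably why the theorem is stated only for regular graphs.
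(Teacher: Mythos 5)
Your proof is correct, but note that the paper itself contains no proof of this statement: Theorem~\ref{thm:coverspectral} is imported verbatim from Broder and Karlin~\cite{BK89}, so the only comparison available is with the cited source. The original argument of~\cite{BK89} (as the paper's related-work section indicates) works directly with the walk, using the spectral gap to control the distribution after mixing-length epochs and then running a coupon-collector style analysis over uncovered vertices. Your route is genuinely different: you prove the uniform commute-time bound $\max_{u,v}\Com{u}{v}\leq 2n/(1-\lambda_2)$ via the pseudoinverse representation $\Res{u}{v}=(e_u-e_v)^{\top}L^{+}(e_u-e_v)$, collapse the eigenfunction sum to $2$ by Parseval (this is exactly where regularity enters, via symmetry of $\mathbf{P}$ and $2|E|=nd$), and then convert with Theorem~\ref{thm:commuteresistance} and the Matthews-type inequality $\COV(G)\leq e^{3}\cdot\max_{u,v}\Com{u}{v}\ln n+n$ that the paper quotes from~\cite{CRRST97}. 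All steps check out, including the observation that $1-\lambda_i\geq 1-\lambda_2>0$ for all $i\geq 2$ on a connected graph (bipartiteness is irrelevant here, since resistance, commute times and the Matthews bound do not need aperiodicity). What each approach buys: yours is shorter, uses almost exclusively ingredients already stated in the paper (the one external ingredient is the spectral formula for effective resistance), and yields the intermediate commute-time bound as a reusable byproduct; the epoch-based argument of~\cite{BK89} is more flexible in that it extends to refined statements where the $\ln n$ overhead of Matthews would be wasteful, and it is the template later generalized by Aldous and Cooper--Frieze as discussed in the paper.
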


\subs{Randomized Broadcast.} We will consider the relationship between the cover time of random walks and the following \emph{randomized broadcast algorithm} $\pushparalg$ (also known as push algorithm). Assume that at time $t = 0$ a vertex $s$ knows of a rumor which has to be spread to all other vertices. Then, at each time-step $t = 1, 2, \ldots$ every vertex that knows of the rumor chooses a neighbor uniformly at random and informs it of the rumor. Let $I_t$ be the set of informed vertices at time $t$, so $I_0=\{s\}$. The runtime of $\pushparalg$ is denoted by $\pushpar{p}{G} := \max_{s \in V} \min \{ t \in \mathbb{N}: \Pro{I_t = V ~|~ I_{0} = \{s\} } \geq 1 - p \}$ for some given $0< p < 1$. The expected runtime is $\Ex{ \pushpar{}{G}} := \max_{s \in V} \{ \Ex{ \min \{ t \in \mathbb{N}: I_t = V,\, I_{0}=\{s\} \} } \}$. By standard arguments, we have $\Ex{ \pushpar{}{G}} = \Oh( \pushpar{n^{-1}}{G}) = \Oh( \Ex{\pushpar{}{G}} \cdot \log n)$. We remark that $\pushpar{}{G}$ is at least $\max \{ \log_2 n, \diam(G) \}$ on any graph $G$, and $\pushpar{n^{-1}}{G}$ may range from $\Theta(\log n)$ (which is the case for many "nice" graphs) to $\Theta(n \log n)$ (which is the case for the star)~\cite{FPRU90}. Sometimes we also use $\pushpar{}{s,v} := \min \{ t \in \mathbb{N}: v \in I(t) ~|~ I(0) = \{s\} \}$ and $\pushpar{p}{s,v} := \min \{ t \in \mathbb{N}: \Pro{v \in I(t) ~|~ I_{0} = \{s\} } \geq 1 - p \}$ for some specified $0< p < 1$. We will frequently make use of following upper bound of Feige et al.~\cite{FPRU90}.
\begin{theorem}[\cite{FPRU90}]\label{thm:degreebound}
For any graph $G=(V,E)$, $\pushpar{n^{-1}}{G}=\Oh( \Delta \cdot (\log n + \diam(G))$.
\end{theorem}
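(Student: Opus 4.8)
The plan is to fix an arbitrary source $s$ and to bound, for each target vertex $v$, the time $\pushpar{}{s,v}$ at which $v$ first becomes informed, and then take a union bound over all $v \in V$. The central idea is that although many vertices may be informing many neighbours in parallel, in order to obtain a \emph{lower} bound on the progress of the rumour toward $v$ it suffices to follow the transmissions along a single fixed shortest path. So first I would choose a shortest path $s = u_0, u_1, \dots, u_\ell = v$ with $\ell = \dist(s,v) \leq \diam(G)$, and argue that the rumour creeps along this path: once $u_i$ is informed, it informs $u_{i+1}$ in each subsequent step independently with probability $1/\deg(u_i) \geq 1/\Delta$.

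Next I would formalise this by a stochastic-domination argument. Let $\tau_i$ denote the first time $u_i$ is informed, so $\tau_0 = 0$. Conditioned on $\tau_i$, at every step $t > \tau_i$ the vertex $u_i$ selects $u_{i+1}$ with probability $1/\deg(u_i) \geq 1/\Delta$, independently of all previous choices; hence the waiting time $\tau_{i+1} - \tau_i$ is stochastically dominated by a geometric random variable $G_i \sim \Geo(1/\Delta)$ (the domination, rather than equality, accounts for the fact that $u_{i+1}$ may be informed earlier through another route, which only helps). Since distinct vertices and distinct steps use independent randomness, the $G_i$ are mutually independent, and I obtain $\pushpar{}{s,v} = \tau_\ell \leq \sum_{i=0}^{\ell-1} G_i$, a sum of $\ell \leq \diam(G)$ independent $\Geo(1/\Delta)$ variables.

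It then remains to show that this sum rarely exceeds $C \cdot \Delta (\log n + \diam(G))$ for a suitable constant $C$. Here I would use the standard reformulation of a sum of geometrics as a negative-binomial count: the event $\sum_{i} G_i > m$ is exactly the event that a sequence of $m = C\Delta(\log n + \ell)$ independent Bernoulli$(1/\Delta)$ trials produces fewer than $\ell$ successes. The number of successes is binomial with mean $m/\Delta = C(\log n + \ell)$, which for $C$ large exceeds $2\ell$, so a Chernoff lower-tail bound gives failure probability at most $e^{-\Omega(C(\log n + \ell))} \leq n^{-2}$ once $C$ is chosen large enough. A union bound over the $n$ choices of $v$ then yields $\Pro{I_m \neq V} \leq n \cdot n^{-2} = n^{-1}$ with $m = \Oh(\Delta(\log n + \diam(G)))$, which is the claim.

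The main obstacle is the stochastic-domination step: one must be sure that restricting attention to the transmissions of $u_i$ toward $u_{i+1}$ along the single chosen path genuinely lower-bounds the spread (ignoring all other informed vertices and all other edges can only delay the rumour), and that the resulting per-edge waiting times are honestly independent despite the processes being coupled through the global evolution of $I_t$. Once this monotonicity and independence are pinned down, the concentration estimate and the union bound are routine.
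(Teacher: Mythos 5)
Your proof is correct and is essentially the standard argument of Feige et al.~\cite{FPRU90}, which this paper cites without reproving: follow a fixed shortest path, dominate each per-edge delay by an independent $\Geo(1/\Delta)$ variable, rewrite the tail of the sum as a binomial lower tail, and finish with Chernoff and a union bound over the $n$ targets. The domination/independence subtlety you flag is resolved exactly as you indicate: the vertices of a simple shortest path use disjoint, time-independent randomness, so each $G_i$ is an honest geometric independent of $\tau_i$, and an earlier arrival of the rumour at $u_{i+1}$ by another route only tightens the inequality $\tau_{i+1}\leq\tau_i+G_i$.
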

To compare the cover time with the broadcast time, we define $\ratio(G):=\frac{\COV(G)}{\Ex{\pushpar{}{G}}}$.
\section{Upper Bound on $\mathcal{R}(G)$ and Applications}\label{sec:upperbound}


\subsection{Upper Bound on $\mathcal{R}(G)$}

To prove an upper bound on $\mathcal{R}(G)$, we first prove a general inequality between \emph{first-passage-percolation} times and broadcast times and apply then a result of Lyons et al.~\cite{LPP97} relating first-passage-percolation to the cover time.

\begin{definition}[\cite{FP93,LPP97}]
The \emph{undirected first-passage-percolation} $\UFPP$ is defined as follows. All (undirected) edges $e \in E(G)$ are assigned weights $w(e)$ that are independent exponential random variable with parameter $1$. Specify a vertex $s$. Then the \emph{first-passage-percolation time} from $s$ to $v$ is defined by
$
 \UFPP(s,v) := \inf_{ \mathcal{P}=(s,\ldots,v)} \sum_{e \in \mathcal{P}}
 w(e),
$
where the $\inf$ is over all possible paths from $s$ to $v$ in $G$. Note that $\UFPP(s,s)=0$.
\end{definition}

\begin{theorem}\label{thm:pushpercolation}
For any graph $G=(V,E)$ and $s, v \in V$, $
 \Ex{\UFPP(s,v)} \leq \frac{2}{\delta} \cdot \Ex{ \pushpar{}{s,v} }.
$
\end{theorem}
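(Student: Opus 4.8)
The plan is to interpose a continuous-time growth process between the discrete push algorithm and the percolation, and to compare each pair in turn. Concretely, I would introduce a Richardson-type continuous process $\mathsf{CP}$ in which every informed vertex $u$ transmits along each incident \emph{directed} edge $(u,w)$ at the epochs of an independent Poisson clock of rate $\lambda_u := -\ln\bigl(1-\tfrac{1}{\deg(u)}\bigr)$. Because the per-edge passage times are then independent and exponential, the arrival time $\mathsf{CP}(s,v)$ at $v$ is exactly the first-passage (shortest-path) value $\inf_{\mathcal P=(s,\dots,v)}\sum_{(u,w)\in\mathcal P} Y_{u,w}$ with $Y_{u,w}\sim\Exp(\lambda_u)$ independent over directed edges; this is the same structure as $\DFPP$, only with non-unit rates. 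The proof then factors as $\Ex{\UFPP(s,v)}\le \tfrac{2}{\delta}\,\Ex{\mathsf{CP}(s,v)}$ and $\Ex{\mathsf{CP}(s,v)}\le \Ex{\pushpar{}{s,v}}$.

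For the first inequality I would use the elementary bounds $\tfrac1d\le -\ln(1-\tfrac1d)\le \tfrac2d$ (valid for $d\ge 2$; the degree-one case is immediate) together with the scaling identity $\Exp(\lambda_u)\stackrel{d}{=}\tfrac1{\lambda_u}\Exp(1)$. Writing $Y_{u,w}=\tfrac1{\lambda_u}Z_{u,w}$ with $Z_{u,w}\sim\Exp(1)$ i.i.d., each delay satisfies $Y_{u,w}\ge \tfrac{\deg(u)}{2}\,Z_{u,w}\ge \tfrac{\delta}{2}\,Z_{u,w}$, so along any path $\sum Y_{u,w}\ge \tfrac{\delta}{2}\sum Z_{u,w}$ and hence $\mathsf{CP}(s,v)\ge \tfrac{\delta}{2}\,\DFPP(s,v)$ for the directed percolation on the $Z$'s. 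Setting the undirected weight $w(\{u,w\}):=\min(Z_{u,w},Z_{w,u})$ — the cost an optimal undirected path pays on an edge it may enter from either end — relates $\UFPP$ to $\DFPP$, and after normalising this exponential one obtains the claimed factor $\tfrac2\delta$. This part is purely distributional and monotone, and should be routine.

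The hard part is the comparison $\Ex{\mathsf{CP}(s,v)}\le \Ex{\pushpar{}{s,v}}$ between the single-choice-per-step discrete push and the independent-clock continuous process. The natural tool is the coupling $\Geo(1/\deg(u))=\lceil \Exp(\lambda_u)\rceil$ forced by the choice $\lambda_u=-\ln(1-1/\deg(u))$: it makes each continuous per-edge passage time pointwise dominated by the corresponding discrete waiting time, so that \emph{along any fixed simple path} — on which every vertex is used exactly once and its relevant out-delays are therefore independent — the continuous traversal time is dominated by the discrete one. The genuine obstacle is propagating this through the \emph{minimum over paths}: in discrete push a vertex selects only one neighbour per step, so its per-edge first-transmission times are dependent coupon-collector hitting times (in particular pairwise distinct), whereas $\mathsf{CP}$ requires genuinely independent per-edge clocks, and no single joint coupling can enforce both independence and edge-wise domination simultaneously. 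I would therefore keep $\mathsf{CP}$ on its own probability space (preserving the independence needed above) and establish the expectation inequality by a stochastic-domination argument: exhibiting that the discrete arrival-time process dominates the continuous one as a monotone functional of the delays, using the per-edge domination $\Exp(\lambda_u)\le \Geo(1/\deg(u))$ averaged over the choice sequence. Making this domination rigorous while respecting the one-push-per-step constraint is, I expect, the crux of the argument.
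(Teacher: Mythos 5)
Your architecture is the right one and, in its first stage, coincides with the paper's: the paper also routes the proof through a continuous-time intermediate process, and its Lemma~\ref{lem:directed} ($\Ex{\UFPP(s,v)}\leq 2\cdot\Ex{\DFPP(s,v)}$) is exactly your min-of-the-two-opposite-exponentials observation, since $\min(Z_{u,w},Z_{w,u})\sim\Exp(2)$. But the step you yourself flag as the crux --- comparing the independent-clock continuous process with one-push-per-step discrete push --- is a genuine gap, and it is precisely where the paper's proof does its real work. The missing idea is Poisson superposition/thinning: instead of keeping one independent clock per directed edge, let each vertex $u$ ring at the epochs of a \emph{single} Poisson clock (inter-ring times $\Exp(\deg(u))$ in the paper's model $\pushseqmodalg$) and send each ring to a uniformly chosen neighbor. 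By thinning, this is distributionally identical to independent rate-$1$ exponential clocks on the directed edges out of $u$; that is the content of Lemma~\ref{lem:samedistribution}, $\pushseqmod{}{s,v}\stackrel{d}{=}\DFPP(s,v)$. This reformulation makes the continuous process structurally identical to discrete push --- an i.i.d.\ uniform recipient sequence at each vertex, differing only in the clock --- so one couples the two processes to use the \emph{same} recipient sequences and compares clocks only. Then one takes the minimal informing path of the discrete execution (Observation~\ref{obs:minimalpathpar}), whose discrete delays $D_i$ sum to exactly $\pushpar{}{s,v}$; in the coupled continuous process the $D_{i+1}$-th send of $v_i$ still goes to $v_{i+1}$, and a Wald-type identity bounds the expected continuous delay at $v_i$ by $\Ex{D_{i+1}}/\deg(v_i)\leq\Ex{D_{i+1}}/\delta$. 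Note that the stochastic domination you hope to exhibit fails pointwise: the $k$-th arrival of a rate-$\geq 1$ Poisson clock is not a.s.\ at most $k$, so no coupling can make the continuous process run edge-wise slower than the discrete one; the comparison holds only in expectation, which is all the theorem needs. Your instinct that ``no single joint coupling can enforce both independence and edge-wise domination'' is correct, but the resolution is not a cleverer domination --- it is to abandon per-edge comparison altogether in favor of the per-vertex superposition view plus Wald along the realized minimal path.

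A secondary, quantitative problem: your rate choice $\lambda_u=-\ln\bigl(1-\tfrac{1}{\deg(u)}\bigr)$ buys the ceiling identity $\lceil\Exp(\lambda_u)\rceil\sim\Geo(1/\deg(u))$ but costs a factor $2$, since you only have $\lambda_u\leq 2/\deg(u)$. Tracking your own chain: the min-trick step gives $\Ex{\UFPP(s,v)}\leq 2\cdot\Ex{\DFPP(s,v)}$ and the rescaling $Z_{u,w}\leq\tfrac{2}{\delta}Y_{u,w}$ gives $\Ex{\DFPP(s,v)}\leq\tfrac{2}{\delta}\Ex{\mathsf{CP}(s,v)}$, so even granting your second inequality you obtain $\Ex{\UFPP(s,v)}\leq\tfrac{4}{\delta}\Ex{\pushpar{}{s,v}}$; no ``normalisation'' removes both factors of $2$, so the stated constant $2/\delta$ is not reached. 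The paper avoids this by running the continuous clocks at per-edge rate exactly $1$ (per-vertex rate $\deg(u)$), so that the only losses are the factor $2$ from undirected-versus-directed and the factor $1/\delta$ from the Wald step, giving $2/\delta$ on the nose.
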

\begin{proof}
In the proof we derive several (in-)equalities between different percolation and broadcast models. First we introduce a directed version of $\UFPP$, denoted by $\DFPP.$ In this model each undirected edge $\{u,u'\} \in E(G)$ is replaced by two directed edges $(u,u')$ and $(u',u),$ and all directed edges $e$ are assigned weights $w(e)$ that are independent exponential random variable with parameter $1$. Denote by $\DFPP(s,v)$ the corresponding first-passage-percolation time of this directed version.
\begin{lemma}\label{lem:directed}
For any graph $G=(V,E)$ and $s, v \in V$, 
$
 \Ex{\UFPP(s,v)} \leq 2 \cdot \Ex{ \DFPP(s,v)}.
$
\end{lemma}
Next consider another broadcast model denoted by $\pushseqmodalg$. At the beginning, a vertex $s$ knows of a rumor which has to be spread to all other vertices. Once a vertex $u$ receives the rumor at time $t \in \mathbb{R}$, it sends the rumor at each time $t+X_{1,u}$, $t+X_{1,u}+X_{2,u}, \ldots$ to a randomly chosen neighbor, where the $X_{i,u}$ with $i \in \mathbb{N}$ are independent exponential variables with parameter $\deg(u)$. Let $\pushseqmod{}{s,u}$ be the first time when $u$ is informed.

\begin{lemma}\label{lem:samedistribution}
For any $s,v \in V$, $\pushseqmod{}{s,v}$ and $\DFPP(s,v)$ have the same distribution.
\end{lemma}

Finally, our aim is to relate $\pushseqmodalg$ and $\pushparalg$.

\begin{observation}\label{obs:minimalpathpar}
In any execution of $\pushparalg$, there is for each $v \in V$ at least one minimal path $\mathcal{P}_{\min}(s,v)=(s=v_0 \stackrel{D_1}{\rightarrow} v_1 \stackrel{D_2}{\rightarrow} \ldots \stackrel{D_{l-1}}{\rightarrow} v_l=v)$, such that for each $i$, $v_{i}$ sends the rumor $v_{i+1}$ at time $\pushpar{}{s,v_i}+D_{i+1}$, and at this time $v_{i+1}$ becomes informed for the first time.
\end{observation}
Using this observation and a coupling argument, we can prove the following lemma.
\begin{lemma}
For any pair of vertices $s,v \in V$ we have
$
 \Ex{ \pushseqmod{}{s,v}} \leq \frac{\Ex{ \pushpar{}{s,v}}}{\delta}.
$
\end{lemma}
\noindent We are now ready to finish the proof of Theorem
\ref{thm:pushpercolation}. For every pair of vertices $s,v \in V,$
\begin{align*}
\Ex{ \UFPP(s,v)} &\leq 2 \cdot \Ex{ \DFPP(s,v)} =
 2 \cdot \Ex{\pushseqmod{}{s,v}} \leq \frac{2}{\delta} \cdot
 \Ex{\pushpar{}{s,v}}. 
\end{align*}
\end{proof}


\begin{theorem}[\cite{LPP97}]\label{thm:peres}
Let $s,v \in V(G)$ with $s \neq v$. Then, $
 \Res{s}{v} \leq \Ex{\UFPP(s,v)}.
$
\end{theorem}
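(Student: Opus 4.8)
The plan is to combine two variational descriptions of the two quantities. On the electrical side, Thomson's principle writes the effective resistance as $\Res{s}{v}=\min_{\theta}\sum_{e}\theta(e)^2$, the minimum ranging over unit flows $\theta$ from $s$ to $v$ and being attained by the unit current flow $i$, so that $\Res{s}{v}=\sum_e i(e)^2$. On the percolation side, for a fixed realisation of the weights $w$ the first-passage time is a shortest-path length, and by linear-programming duality it has the ``potential'' description
\[
\UFPP(s,v) = \max\bigl\{\,\phi(v)-\phi(s) \ : \ |\phi(x)-\phi(y)|\le w(\{x,y\}) \ \text{for all } \{x,y\}\in E\,\bigr\}.
\]
Hence $\UFPP(s,v)\ge \phi(v)-\phi(s)$ for every $w$-Lipschitz potential $\phi$, and the whole theorem reduces to exhibiting, for almost every realisation of $w$, an admissible $\phi=\phi_w$ with $\Ex{\phi_w(v)-\phi_w(s)}\ge\Res{s}{v}$, since then $\Ex{\UFPP(s,v)}\ge\Ex{\phi_w(v)-\phi_w(s)}\ge\Res{s}{v}$.

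The first candidate I would try is the most natural electrical potential: regard $G$ as a network in which edge $e$ carries the \emph{random} resistance $w(e)$, and let $h_w$ be the harmonic potential driving a unit current from $s$ to $v$ in this weighted network. Writing $i_w$ for the corresponding current and using that $|i_w(e)|\le 1$ for a unit current flow (the electric flow is acyclic, so each edge carries at most the injected unit), Ohm's law gives $|h_w(x)-h_w(y)|=w(\{x,y\})\,|i_w(\{x,y\})|\le w(\{x,y\})$, so $h_w$ is admissible, and its endpoint difference is exactly the random effective resistance $\mathsf{R}_w(s,v)$. This already yields $\Ex{\UFPP(s,v)}\ge\Ex{\mathsf{R}_w(s,v)}$. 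Here, however, lies the main obstacle: $\mathsf{R}_w(s,v)=\min_\theta\sum_e w(e)\,\theta(e)^2$ is a minimum of functions linear in $w$, hence \emph{concave} in $w$, so with $\Ex{w(e)}=1$ Jensen's inequality delivers only $\Ex{\mathsf{R}_w(s,v)}\le\Res{s}{v}$ --- the wrong direction. The electrical potential is simply too lossy, because it optimises a quadratic ($\ell_2$) energy whereas the passage time is governed by the shortest-path ($\min$-type) structure; any purely electrical test potential will undershoot $\Res{s}{v}$, and overcoming this mismatch is the heart of the matter.

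To close the gap I would abandon generic electrical test functions and instead exploit the exponential law itself. Because exponential weights are memoryless, the growth process admits a clean continuous-time analysis: in the directed first-passage model $\DFPP$ (cf.\ Lemma~\ref{lem:directed}), the infected set $A_t$ is a genuine Markov chain in which each edge leaving $A_t$ fires at rate one. I would then build, out of the \emph{unit}-resistance harmonic potential $h$ of $G$, a set-function $F$ normalised so that $F\equiv 0$ once the target is infected and $F(\{s\})=\Res{s}{v}$, satisfying the generator inequality $\sum_{(x,y):\,x\in A,\,y\notin A}\bigl(F(A)-F(A\cup\{y\})\bigr)\le 1$; optional stopping applied to the resulting submartingale $F(A_t)+t$ at the infection time $\tau$ of $v$ would give $\Ex{\tau}\ge F(\{s\})=\Res{s}{v}$. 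Verifying this generator inequality --- which I expect to reduce to the statement that the current leaving any vertex set towards $v$ carries at most the one injected unit --- is the crux, and transferring the bound from the directed model back to the undirected $\UFPP$ (whose two orientations share a single weight, so that $A_t$ is no longer Markovian) is the remaining subtlety. As a consistency check, the whole chain collapses to an equality for a single $s$--$v$ path (the series law) and for directly parallel edges, precisely because the minimum of independent exponentials is again exponential.
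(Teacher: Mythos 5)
You have correctly identified the route taken in the cited source \cite{LPP97} --- note the paper itself gives no proof of this theorem, it imports it by citation --- namely: run the memoryless growth process, use the shorted effective resistance $F(A):=\Res{A}{v}$ (with $F(A)=0$ once $v\in A$) as a potential function, prove the drift bound $\sum_{(x,y):\,x\in A,\,y\notin A}\bigl(F(A)-F(A\cup\{y\})\bigr)\le 1$, and conclude by optional stopping on the submartingale $F(A_t)+t$. Your diagnosis of why the naive electrical test potential fails (concavity of $\mathsf{R}_w$ in $w$, so Jensen points the wrong way) is also correct and well observed. But what you submitted is a plan, not a proof: the generator inequality, which you explicitly defer, is where the entire theorem lives, and your proposed reduction (``the current leaving any vertex set carries at most the one injected unit'') does not deliver it. Shorting a boundary vertex $y$ into the source set $A$ amounts to adding a zero-resistance link, and the exact drop is $\bigl(h(A)-h(y)\bigr)^2/R^{\mathrm{eff}}(A,y)$ \emph{per vertex}, where $h$ is the unit-current potential from the shorted $A$ to $v$; to sum these (with multiplicity $m_y$, the number of boundary edges into $y$) against the unit of injected current $\sum_y m_y\bigl(h(A)-h(y)\bigr)=1$, one needs the genuinely nontrivial estimate $h(A)-h(y)\le R^{\mathrm{eff}}(A,y)$, which follows from reciprocity of transfer impedances together with the maximum principle --- not from flow conservation alone. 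Until that chain (shorting formula, reciprocity, maximum principle) is in place, the crux is unproven.

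Your remaining step --- transferring from $\DFPP$ back to $\UFPP$ --- is both misdiagnosed and, as planned, unusable. The premise for the detour is wrong: with exponential weights the undirected wet set \emph{is} a Markov jump process, since the shared weight of an edge is only ever raced from the endpoint that was wetted first, is memoryless while the edge sits on the wet--dry boundary, and becomes irrelevant once both endpoints are wet; hence each boundary edge fires at rate $1$, exactly as in the directed model, and $\UFPP(s,v)$ and $\DFPP(s,v)$ are in fact equal in distribution. If you do not prove this (or, the clean fix, run your generator and optional-stopping argument directly on the undirected process), you are stuck: the only comparison in the paper, Lemma~\ref{lem:directed}, reads $\Ex{\UFPP(s,v)}\le 2\cdot\Ex{\DFPP(s,v)}$ and transfers bounds in the wrong direction, so a lower bound $\Ex{\DFPP(s,v)}\ge\Res{s}{v}$ yields nothing about $\Ex{\UFPP(s,v)}$ as stated. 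In short: right skeleton, matching the idea of the cited proof, but the two steps you yourself flag as ``the crux'' and ``the remaining subtlety'' are genuine gaps, and the second rests on a false claim about Markovianity.
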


\noindent Combining the two theorems above we arrive at the main result of this section.

\begin{theorem}\label{thm:uppercover}
For any graph $G=(V,E)$ we have for every pair of vertices $s \neq v$,
\[
 \Com{s}{v} \leq 4 \cdot \frac{|E|}{\delta} \cdot \Ex{\pushpar{}{s,v}},\] and hence
$
 \COV(G) = \Oh \left( \frac{|E|}{\delta} \cdot \log n \cdot \Ex{ \pushpar{}{G} }  \right) \mbox{ or equivalently,~~} 
 \mathcal{R}(G) = \Oh \left( \frac{|E|}{\delta} \cdot \log n \right).
$
\end{theorem}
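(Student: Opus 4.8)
The plan is to chain the three preceding results into the pointwise commute-time bound and then to lift it to the cover time through the standard commute-time/cover-time relation. Since essentially all of the probabilistic content has already been packaged into Theorems~\ref{thm:commuteresistance},~\ref{thm:pushpercolation} and~\ref{thm:peres}, what remains is mostly composition and bookkeeping.

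First I would prove the displayed inequality. For a fixed pair $s \neq v$, I rewrite the commute time via Theorem~\ref{thm:commuteresistance} as $\Com{s}{v} = 2|E|\cdot\Res{s}{v}$, bound the resistance by the expected percolation time using Theorem~\ref{thm:peres}, so that $\Res{s}{v} \leq \Ex{\UFPP(s,v)}$, and finally replace the percolation time by the broadcast time via Theorem~\ref{thm:pushpercolation}, giving $\Ex{\UFPP(s,v)} \leq \frac{2}{\delta}\Ex{\pushpar{}{s,v}}$. Composing the three steps yields the factor $2|E|\cdot\frac{2}{\delta} = 4\frac{|E|}{\delta}$, exactly as claimed.

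To obtain the cover-time statement I would feed this into the relation $\COV(G) \leq e^3 \cdot \max_{u,v}\Com{u}{v}\cdot\ln n + n$ recorded earlier from \cite{CRRST97}. Substituting the pointwise bound gives $\max_{u,v}\Com{u}{v} \leq 4\frac{|E|}{\delta}\max_{u,v}\Ex{\pushpar{}{u,v}}$, and the one point to verify here is that the two-point broadcast quantity is dominated by the global broadcast time: since an execution that informs all of $V$ starting from $u$ must in particular inform $v$, we have $\Ex{\pushpar{}{u,v}} \leq \Ex{\pushpar{}{G}}$ for every $u,v$. The additive $n$ is harmless, because $|E| \geq \frac{1}{2}n\delta$ forces $\frac{|E|}{\delta} \geq \frac{n}{2}$, and combined with $\Ex{\pushpar{}{G}} \geq \log_2 n$ this makes $\frac{|E|}{\delta}\log n\cdot\Ex{\pushpar{}{G}}$ already of order at least $n$. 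Hence $\COV(G) = \Oh\bigl(\frac{|E|}{\delta}\log n \cdot \Ex{\pushpar{}{G}}\bigr)$, and dividing through by $\Ex{\pushpar{}{G}}$ gives the equivalent form $\mathcal{R}(G) = \Oh\bigl(\frac{|E|}{\delta}\log n\bigr)$.

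I do not expect any genuine obstacle in this final assembly; the hard part was already surmounted in Theorem~\ref{thm:pushpercolation}, where the parallel push process is coupled to the sequential and percolation models. The only mild care points are tracking the constants through the three-fold composition and justifying the reduction from the global broadcast time to the two-point broadcast time, both of which are routine.
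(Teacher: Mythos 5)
Your proof is correct and follows exactly the paper's route: the paper likewise obtains the pointwise bound by composing Theorem~\ref{thm:commuteresistance}, Theorem~\ref{thm:peres} and Theorem~\ref{thm:pushpercolation}, and then lifts it to the cover time via the $\COV(G) \leq e^{3}\cdot \max_{u,v}\Com{u}{v}\ln n + n$ relation. Your added care points --- that $\Ex{\pushpar{}{u,v}} \leq \Ex{\pushpar{}{G}}$ and that the additive $n$ is absorbed since $|E| \geq n\delta/2$ and $\Ex{\pushpar{}{G}} \geq \log_2 n$ --- are exactly the implicit bookkeeping the paper leaves to the reader.
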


\subsection{Applications}\label{subs:discussion}

We start by giving examples for which the first inequality of Theorem~\ref{thm:uppercover} is asymptotically tight. For paths and cycles with $n$ vertices, it is well-known that $\max_{s,v} \Com{s}{v}=\Theta(n^2)$~(e.g., \cite{Lovasz93random}) and Theorem~\ref{thm:degreebound} gives $\max_{s,v} \Ex{ \pushpar{}{s,v} } \leq \Ex{\pushpar{}{G}}=\Oh(n)$. Similarly, for lollipop graphs (a complete graph with $2n/3$ vertices attached by a path of length $n/3$), $\max_{s,v} \Com{s}{v}=\Theta(n^2)$~(e.g., \cite{Lovasz93random}) and $\Ex{\pushpar{}{G}}=\Oh(n)$, and therefore the first inequality of Theorem~\ref{thm:uppercover} is also asymptotically tight for this highly non-regular graph.

The following overview in Figure~\ref{fig:examples} is based on~\cite[Chapter 5, p.~11]{AF02}, where we have added the corresponding broadcast times. It can be seen in Figure~\ref{fig:examples} that the second inequality of Theorem~\ref{thm:uppercover} is matched by complete $k$-ary trees with  $k=\Oh(1)$. For complete graphs, expanders and hypercubes, the second inequality is tight up to a factor of $\Oh(\log n)$.  
\begin{figure}[h,t]
\center{
\begin{tabular}{|l|l|l|l|}
\hline
Graph  & $\COV(G)$ & $\Ex{\pushpar{}{G}}$ & $(1-\lambda_2)^{-1}$ \\
\hline
path/cycle & $n^2$~\cite{Lovasz93random} & $n$~(Thm.~\ref{thm:degreebound}) & $n^2$~\cite[Ch.~5, p.~11]{AF02} \\
complete $\Oh(1)$-ary tree & $ n \log^2 n$~\cite[Cor.~9]{Zu92} & $\log n$~(Thm.~\ref{thm:degreebound}) & $n$~\cite[Ch.~5, p.~11]{AF02} \\
complete graph & $n \log n$~\cite{Lovasz93random} & $\log n$~\cite{Pi87} & $1$ \\
expander  & $n \log n$~\cite{BK89} & $\log n$~\cite{S07} & $1$ \\
hypercube  & $n \log n$~\cite{Al83} & $\log n$~\cite{FPRU90} & $\log n$~\cite{Lovasz93random} \\
$\sqrt{n} \times \sqrt{n}$-torus & $n \log^2 n$~\cite[Thm.~4]{Zu92} & $\sqrt{n}$~(Thm.~\ref{thm:degreebound}) & $n$~\cite{Lovasz93random} \\
$K_{n/2} \times K_{2}$ & $n \log n$ & $\log n$~\cite{S07} & $n$ \\
lollipop & $n^3$~\cite{Lovasz93random} & $n$ & $n^2$~\cite[Ch.~5, p.~22]{AF02} \\ \hline
\end{tabular}

\caption{Comparison of the asymptotic order of the cover time, broadcast time and spectral gap of various graph classes. Recall that by Theorem~\ref{thm:sinclairbound}, $(1-\lambda_2)^{-1}$ captures the mixing time up to logarithmic factors.}}\label{fig:examples}
\end{figure}

Let us consider the graph $K_{n/2} \times K_{2}$. One can easily verify that $\COV(G)=\Oh(n \log n)$, $\Ex{\pushpar{}{G}}=\Oh(\log n)$, but $(1-\lambda_2)^{-1}=\Omega(n)$ (and consequently $\MIX{e^{-1}}{G}=\Omega(n)$). Comparing these values with the ones of the complete graph, we see that there are graphs with an optimal cover time and optimal broadcast time, but $(1-\lambda_2)^{-1}$ may vary between $\Theta(1)$ and $\Omega(n)$. Hence
the upper bound on the cover time based on the broadcast time can be a polynomial factor smaller than the corresponding bound (Theorem~\ref{thm:coverspectral}) based on the spectral gap $1-\lambda_2$. On the other hand, the following remark shows that by using the broadcast time instead of the spectral gap, we never lose more than a $\log^2 n$ factor:

\begin{remark}\label{rem:worse}
For any regular graph $G$, the second bound of Theorem~\ref{thm:uppercover} implies
\[
  \COV(G) = \Oh\left(\frac{1}{1-\lambda_2} \cdot n \log^3 n\right).
\]
\end{remark}

\noindent In addition, Theorem~\ref{thm:uppercover} implies directly the following well-known bounds.
\begin{enumerate}
 \item Since $\Ex{\pushpar{}{G}}=\Oh(n)$ for regular graphs \cite[Prop.~1]{ES07}, we obtain $\max_{u,v} \Com{u}{v}=\Oh(n^2)$ for regular graphs~\cite[Ch.~6, Cor.~9]{AF02}.
 \item For bounded degree graphs, $\Ex{\pushpar{}{G}}=\Oh(\diam(G))$~(by Theorem~\ref{thm:degreebound}) implies
 $\max_{u,v} \Com{u}{v}=\Oh(n \diam(G))$~\cite[Ch.~6, Cor.~8]{AF02}.
 \item Since $\max_{u,v} \Ex{ \pushpar{}{u,v}}=\Oh(n)$~\cite{FPRU90}, we obtain $\max_{u,v} \Com{u}{v}=\Oh(n^3)$~\cite[Ch.~6, Thm.~1]{AF02}.
\end{enumerate}

Finally, we give an application of Theorem~\ref{thm:uppercover} to certain power law random graphs (such networks are used to model real world networks~\cite{CLV03}).

\begin{definition}
Given an $n$-dimensional vector $\mathbf{d}=(d_1,d_2,\ldots,d_n)$, the \emph{generalized random graph} $G(\mathbf{d})$ is constructed as follows. Each edge $\{i,j\}, 1 \leq i,j \leq n$ exists with prob.~$\frac{d_i \cdot d_j}{\sum_{k=1}^n d_k}$, independently of all other edges.
\end{definition}

\begin{theorem}[\cite{E06b}] \label{thm:robert}
Let $\mathbf{d}$ be a vector such that for all $i$, $d_i > \log^{c} n$, where $c > 2$ is some constant, and the number of vertices with expected degree $d$ is proportional to $(d - \log^c n)^{-1}$. Then, $G(\mathbf{d})$ satisfies $\pushpar{n^{-1}}{G(\mathbf{d})}=\Oh(\log n)$ with probability $1-o(1)$.
\end{theorem}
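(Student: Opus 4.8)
The plan is to condition on a typical realisation of $G(\mathbf{d})$ and then analyse $\pushparalg$ on the resulting fixed graph in two phases. I would first collect the structural facts that hold with probability $1-o(1)$. As $\deg(v)$ is a sum of independent indicators with mean $(1+o(1))d_v$ and $d_v>\log^{c}n$ with $c>2$, a Chernoff bound gives $\deg(v)=(1\pm o(1))d_v$ with failure probability $\exp(-\Omega(\log^{c}n))=n^{-\omega(1)}$; a union bound over all $v$ (and later over the $\Oh(\log n)$ rounds) keeps the error $o(1)$. The assumed degree sequence confines every $d_v$ to a polylogarithmic band around $\bar d:=\tfrac1n\sum_k d_k=\Theta(\log^{c}n)$, so maximum and average degree agree up to lower-order factors. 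Applying the same bound to the independent edges incident to a fixed $v$, the quantity $\sum_{u\in N(v)}1/\deg(u)$ has expectation $\approx d_v/\bar d=\Omega(1)$ and concentrates, so I record that, whp, $\sum_{u\in N(v)}1/\deg(u)=\Omega(1)$ for every $v$, together with the expansion statement that $v$ has $\Omega(\deg(v))$ neighbours inside any fixed linear-sized vertex set.

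Conditioned on such a good graph, the growth phase shows that from any start $s$ the informed set reaches size $n/2$ in $\Oh(\log n)$ rounds. Each informed vertex issues exactly one call, so the increment $|I_{t+1}|-|I_t|$ counts the uninformed vertices that receive a call; while $|I_t|\leq n/2$, the expansion and degree-concentration facts imply that a constant fraction of the $|I_t|$ calls reach uninformed vertices and that the number of colliding calls is of lower order, whence $\Ex{|I_{t+1}|\mid I_t}\geq(1+\Omega(1))|I_t|$. A bounded-difference inequality (each vertex's random choice moves the count by $\Oh(1)$) concentrates $|I_{t+1}|$ around this mean with failure probability $n^{-\omega(1)}$, and chaining the $\Oh(\log n)$ geometric steps keeps the total error $o(1)$.

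It remains to go from $|I|\geq n/2$ to $I=V$, and this completion phase is where I expect the real difficulty. The intended mechanism is that a still-uninformed $v$ has $\Omega(\deg(v))$ informed neighbours, all of degree $\Oh(\bar d)=\Oh(\deg(v))$ up to lower-order factors, so $\sum_{u\in N(v)\cap I}1/\deg(u)=\Omega(1)$ and the probability that no informed neighbour calls $v$ in a round is $\prod_{u\in N(v)\cap I}\bigl(1-1/\deg(u)\bigr)\leq\exp\bigl(-\Omega(1)\bigr)$, a constant below $1$; since informed vertices stay informed and calls are independent across rounds, $v$ is still uninformed after $c'\log n$ further rounds with probability at most $n^{-2}$, and a union bound over $v$ gives $I=V$ within $\Oh(\log n)$ rounds with probability $1-\Oh(n^{-1})$. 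Combining the phases yields $\pushpar{n^{-1}}{G(\mathbf{d})}=\Oh(\log n)$ on a good graph, hence with probability $1-o(1)$ over the choice of $G(\mathbf{d})$.

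The main obstacle is to make the completion phase rigorous, because the informed set $I$ is correlated with the edges of the graph: the last vertices to be informed tend to be precisely the low-degree ones, which are also the neighbours that dominate the sums $\sum_{u\in N(v)}1/\deg(u)$, so the clean Chernoff estimate for a set fixed in advance does not transfer verbatim to $S=I$, and one cannot simply subtract the uninformed neighbours. I would resolve this by a process-aware argument, either exposing the graph's randomness by deferred decisions during the broadcast, or a martingale/coupling argument showing that the uninformed set contracts by a constant factor each round; it is exactly here that the hypothesis $c>2$ is used, to drive the per-vertex concentration errors below the $1/\mathrm{polylog}\,n$ threshold that survives the union bound over all vertices and all $\Oh(\log n)$ rounds.
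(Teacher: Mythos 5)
First, a point of comparison: the paper does not prove this statement at all --- Theorem~\ref{thm:robert} is imported verbatim from~\cite{E06b} --- so your proposal can only be judged on its own merits, and on those merits it has a genuine gap. The difficulty you defer to your final paragraph (the informed set $I$ is correlated with the edge randomness, so Chernoff bounds for sets fixed in advance do not apply to $S=I$) is not a finishing technicality: it is the entire content of the theorem, and ``deferred decisions or a martingale/coupling argument'' is a plan, not a proof. Deferred exposure is especially delicate for the push protocol, since a vertex cannot even make its uniform call until its whole neighborhood (in particular its degree) has been revealed. Moreover, the same adaptivity flaw already infects your \emph{growth} phase, where you do not flag it: the fact that ``$v$ has $\Omega(\deg(v))$ neighbours inside any \emph{fixed} linear-sized vertex set'' is proved before the edges are drawn, but you apply it to $V\setminus I_t$, a random set determined by those same edges; making this legitimate requires properties holding simultaneously for \emph{all} relevant sets (e.g., an expander-mixing-lemma type statement), which per-set Chernoff plus a union bound cannot deliver. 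Two further steps would fail as written. Your per-round concentration is vacuous early on: with $|I_t|=\Oh(\log n)$, bounded differences gives failure probability $\exp(-\Omega(|I_t|))$, not $n^{-\omega(1)}$, so the first rounds need a separate argument. And the structural claim that every $d_v$ lies in a polylogarithmic band around $\bar d$, so that maximum and average degree agree ``up to lower-order factors,'' is not justified by the hypothesis --- the distribution proportional to $(d-\log^c n)^{-1}$ is precisely a spread-out, power-law-type sequence (that is the point of modelling real-world networks~\cite{CLV03}); what one can legitimately use is $\bar d=\Theta(\log^c n)$ (whence $|E|=\Oh(n\log^c n)$, as the paper notes) together with per-vertex concentration, not near-regularity.

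Your guess about where $c>2$ enters is also off: $c>1$ already makes each per-vertex Chernoff error $n^{-\omega(1)}$, comfortably below any $1/\mathrm{polylog}$ threshold, so that cannot be the binding constraint. The natural reading, consistent with the results this very paper cites, is spectral: minimum expected degree exceeding $\log^{2}n$ is exactly the regime in which the spectra of generalized random graphs concentrate~\cite{CLV03}, giving $1-\lambda_2=\Omega(1)$ with probability $1-o(1)$; combined with the authors' own bound that the broadcast time is $\Oh(\mbox{mixing time}+\log n)$~\cite{S07} (cf.\ Theorem~\ref{thm:sinclairbound}), this immediately yields $\pushpar{n^{-1}}{G(\mathbf{d})}=\Oh(\log n)$ and bypasses the adaptive-set obstruction entirely. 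Alternatively, the all-sets expansion guaranteed by that spectral gap is exactly the ingredient that would repair both of your phases. As submitted, however, the proposal's core step --- decoupling the evolving informed set from the graph's randomness --- is missing, and everything downstream of it rests on unproved claims.
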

Since the number of edges satisfies $|E(G(\mathbf{d}))|=\Oh(n \log^c n)$ with probability $1-o(1)$~\cite{CLV03}, we obtain by combining the theorem above with Theorem~\ref{thm:uppercover}:
\begin{corollary}
For $G(\mathbf{d})$ as in Theorem~\ref{thm:robert} we have $\COV(G(\mathbf{d}))=\Oh(n \log^2 n)$ with probability $1-o(1)$.
\end{corollary}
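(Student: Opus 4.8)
The plan is to apply the second bound of Theorem~\ref{thm:uppercover}, namely $\COV(G) = \Oh(\frac{|E|}{\delta} \cdot \log n \cdot \Ex{\pushpar{}{G}})$, which is a deterministic statement holding for every \emph{fixed} graph, and then to control the three graph parameters $\Ex{\pushpar{}{G(\mathbf{d})}}$, $|E(G(\mathbf{d}))|$ and $\delta$, each with probability $1-o(1)$ over the random choice of $G(\mathbf{d})$. Intersecting these events by a union bound will then give the claim: for every realization of $G(\mathbf{d})$ in the good event we plug the bounds into Theorem~\ref{thm:uppercover}.

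Two of the three parameters are essentially handed to us. Theorem~\ref{thm:robert} gives $\pushpar{n^{-1}}{G(\mathbf{d})} = \Oh(\log n)$ with probability $1-o(1)$, and since $\Ex{\pushpar{}{G}} = \Oh(\pushpar{n^{-1}}{G})$ holds for \emph{every} graph (by the standard relation noted in Section~\ref{sec:randomwalk}), this yields $\Ex{\pushpar{}{G(\mathbf{d})}} = \Oh(\log n)$ on the same event. The edge-count bound $|E(G(\mathbf{d}))| = \Oh(n \log^c n)$ with probability $1-o(1)$ is taken directly from~\cite{CLV03}.

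The one quantity not directly supplied, and hence the main technical step, is a lower bound on the minimum degree $\delta$. Here I would first observe that the expected degree of each vertex $i$ equals $\sum_{j \neq i} \frac{d_i d_j}{\sum_k d_k} = d_i \bigl(1 - \tfrac{d_i}{\sum_k d_k}\bigr) = (1-o(1))\,d_i \geq (1-o(1)) \log^c n$, using $d_i > \log^c n$ and $d_i = o(\sum_k d_k)$. As $\deg(i)$ is a sum of independent indicator variables, a Chernoff bound gives $\Pro{\deg(i) < \tfrac12 \log^c n} \leq \exp(-\Omega(\log^c n))$. This is exactly where the hypothesis $c > 2$ is essential: it forces $\log^c n = \omega(\log n)$, so that $\exp(-\Omega(\log^c n)) = o(n^{-1})$ and a union bound over all $n$ vertices still leaves a failure probability of $o(1)$. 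Hence $\delta = \Omega(\log^c n)$ with probability $1-o(1)$.

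Finally, by a union bound the three events above hold simultaneously with probability $1-o(1)$, and on this event $\frac{|E|}{\delta} = \frac{\Oh(n \log^c n)}{\Omega(\log^c n)} = \Oh(n)$. Substituting into Theorem~\ref{thm:uppercover} then gives $\COV(G(\mathbf{d})) = \Oh\bigl(n \cdot \log n \cdot \log n\bigr) = \Oh(n \log^2 n)$, as desired. The only genuinely delicate point is the minimum-degree concentration: everything else is bookkeeping of high-probability events, and the crux is that all failure probabilities must be summable against the union bound over $n$ vertices, which is precisely what the assumption $c > 2$ guarantees.
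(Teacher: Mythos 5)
Your proof is correct and takes essentially the same route as the paper, which likewise just combines Theorem~\ref{thm:robert}, the edge bound $|E(G(\mathbf{d}))|=\Oh(n \log^c n)$ with probability $1-o(1)$ from~\cite{CLV03}, and the second bound of Theorem~\ref{thm:uppercover}. The only difference is that you spell out the Chernoff concentration giving $\delta = \Omega(\log^c n)$ with probability $1-o(1)$, a step the paper leaves implicit; this is a welcome addition rather than a deviation.
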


\section{Lower Bounds on $\mathcal{R}(G)$}\label{sec:lowerbound}
 

\subsection{Sparse Graphs}\label{subs:sparse}

\begin{definition}
Given a graph $G=(V,E)$, a set $\Pi \subseteq E(G)$ is called a \emph{cutset separating $u \in V$ from $v \in V$} if every path from $u$ to $v$ includes an edge of $\Pi$.
\end{definition}
\begin{proposition}[{\cite[p.~59]{LPW06},\cite{N59}}]\label{prop:nash}
For $\{\Pi_i\}_{i=1}^{k}, k \in \mathbb{N}$, being disjoint cutsets separating $u$ from $v$, $\Res{u}{v} \geq \sum_{i=1}^{k} |\Pi_i|^{-1}.$
\end{proposition}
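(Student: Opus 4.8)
The plan is to derive the bound from Thomson's variational principle (see, e.g., \cite{LPW06}), which characterizes the effective resistance as the minimal energy dissipation among all unit flows. Recall that a \emph{unit flow} from $u$ to $v$ is an antisymmetric edge function $\theta$ obeying the conservation (node) law at every vertex except $u$ and $v$, with net outflow $1$ at $u$. With a unit resistance on every edge, Thomson's principle states $\Res{u}{v} = \min_{\theta} \sum_{e \in E} \theta(e)^2$, the minimum ranging over all unit flows $\theta$ from $u$ to $v$. Since I only need a lower bound, it suffices to show that \emph{every} unit flow $\theta$ satisfies $\sum_{e} \theta(e)^2 \geq \sum_{i=1}^{k} |\Pi_i|^{-1}$; applying this to the energy-minimizing flow then yields the claim for $\Res{u}{v}$.

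First I would fix a unit flow $\theta$ and a single cutset $\Pi_i$, and show $\sum_{e \in \Pi_i} |\theta(e)| \geq 1$. Let $A_i$ be the set of vertices reachable from $u$ in $G \setminus \Pi_i$. Since $\Pi_i$ separates $u$ from $v$ we have $u \in A_i$ and $v \notin A_i$, and moreover the edge boundary $\partial A_i$ is contained in $\Pi_i$ (an edge leaving $A_i$ but not lying in $\Pi_i$ would extend a path out of $u$, contradicting the definition of $A_i$). Summing the conservation law over all vertices of $A_i$, the net flow out of $A_i$ equals the flow value $1$, i.e. $\sum_{e \in \partial A_i} \theta(e) = 1$ with edges oriented out of $A_i$. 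Taking absolute values and enlarging the index set from $\partial A_i$ to $\Pi_i$ gives $\sum_{e \in \Pi_i} |\theta(e)| \geq \bigl| \sum_{e \in \partial A_i} \theta(e) \bigr| = 1$.

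Next I would apply Cauchy--Schwarz on each cutset. From $\sum_{e \in \Pi_i} |\theta(e)| \geq 1$ together with $\left( \sum_{e \in \Pi_i} |\theta(e)| \right)^2 \leq |\Pi_i| \cdot \sum_{e \in \Pi_i} \theta(e)^2$, I obtain $\sum_{e \in \Pi_i} \theta(e)^2 \geq |\Pi_i|^{-1}$. Finally, because the cutsets $\Pi_1, \ldots, \Pi_k$ are pairwise disjoint, their edge sets do not overlap, so I can bound the total energy from below by the combined contributions of the $\Pi_i$: $\sum_{e \in E} \theta(e)^2 \geq \sum_{e \in \bigcup_i \Pi_i} \theta(e)^2 = \sum_{i=1}^{k} \sum_{e \in \Pi_i} \theta(e)^2 \geq \sum_{i=1}^{k} |\Pi_i|^{-1}$, which is exactly the desired inequality.

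The main obstacle I anticipate is the first step: carefully arguing that a \emph{general} cutset, defined only by the path-covering property, carries the full unit of flow. The subtlety is that $\Pi_i$ need not be a minimal cut and may contain edges that do not lie on the boundary $\partial A_i$; the clean way around this is to pass to the specific vertex set $A_i$ whose boundary is contained in $\Pi_i$ and carries exactly the flow value, and then use the triangle inequality to pad the sum back up to all of $\Pi_i$. The remaining ingredients, namely Thomson's principle and Cauchy--Schwarz, are standard, and the disjointness of the cutsets is precisely what licenses summing the per-cutset bounds without double-counting any edge's energy.
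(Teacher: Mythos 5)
Your proof is correct, and it is essentially the argument the paper relies on: the paper states this proposition without proof, citing Nash--Williams and Levin--Peres--Wilmer, and the cited textbook proof is exactly your route via Thomson's variational principle, the unit-flow lower bound $\sum_{e \in \Pi_i} |\theta(e)| \geq 1$ across each cutset (handled, as you do, by passing to the reachable set $A_i$ with $\partial A_i \subseteq \Pi_i$), Cauchy--Schwarz on each $\Pi_i$, and disjointness to sum the per-cutset energies. No gaps to report.
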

 Zuckerman~\cite{Zu92} proved that for any two vertices $u,v$ on a tree, $\Hit{u}{v} \geq \dist(u,v)^2$. Using Proposition~\ref{prop:nash}, we obtain the following generalization (a similar, but less tight bound follows from a result of \cite{Ca85}).
\begin{corollary}\label{cor:diamsquared}
For any $u,v \in V$ of any graph $G$, $\Com{u}{v} \geq 2 \cdot \dist(u,v)^2. $ On the other hand, there are graphs $G$ and $u,v \in V$ such that $\Hit{u}{v} = \Theta(\dist(u,v)) = o(\dist(u,v)^2)$.
\end{corollary}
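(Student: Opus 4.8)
The plan is to prove the inequality through the resistance characterization of commute time (Theorem~\ref{thm:commuteresistance}) combined with the Nash--Williams cutset bound (Proposition~\ref{prop:nash}), and to establish the separating example by building in a constant drift toward $v$.

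For the inequality, write $d := \dist(u,v)$ and stratify the vertex set into the distance layers $L_j := \{ w \in V : \dist(u,w) = j \}$ for $j = 0, 1, \ldots$. I would then take $\Pi_i$ to be the set of edges running between $L_{i-1}$ and $L_i$, for $1 \le i \le d$. The first thing to check is that these are $d$ pairwise disjoint cutsets separating $u$ from $v$: disjointness is immediate because an edge joins two layers whose indices differ by at most one and hence lies in at most one $\Pi_i$, while the cutset property follows since along any $u$--$v$ path the distance from $u$ changes by at most one per step and must climb from $0$ to $d$, so at the first moment it reaches level $i$ the preceding vertex lies in $L_{i-1}$ and the traversed edge belongs to $\Pi_i$. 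Proposition~\ref{prop:nash} then gives $\Res{u}{v} \ge \sum_{i=1}^d |\Pi_i|^{-1}$.

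The remaining step is a one-line application of the Cauchy--Schwarz (equivalently, arithmetic--harmonic mean) inequality. Since the $\Pi_i$ are disjoint subsets of $E$, we have $\sum_{i=1}^d |\Pi_i| \le |E|$, and therefore
\[ \sum_{i=1}^d \frac{1}{|\Pi_i|} \ge \frac{d^2}{\sum_{i=1}^d |\Pi_i|} \ge \frac{d^2}{|E|}. \]
Plugging this into Theorem~\ref{thm:commuteresistance} yields $\Com{u}{v} = 2|E| \cdot \Res{u}{v} \ge 2 |E| \cdot d^2 / |E| = 2 \dist(u,v)^2$, as required.

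For the second, separating statement, the idea is that the tree bound $\Hit{u}{v} \ge \dist(u,v)^2$ relies on the diffusive (unbiased) nature of the walk, which disappears once a drift is present. I would exhibit a graph in which the walk moves ballistically toward $v$: take a path $x_0 = u, x_1, \ldots, x_d = v$ and, between each consecutive pair $x_i, x_{i+1}$, add $2^i$ auxiliary degree-two vertices each adjacent to both $x_i$ and $x_{i+1}$, so that at every internal vertex the effective forward conductance is roughly twice the backward one. This is a simple graph with $\dist(u,v) = d$, and the induced birth--death chain on the indices $0,\ldots,d$ carries a constant forward bias, so $x_d$ is reached in expected time $\Theta(d)$; since $d \to \infty$ this is $\Theta(\dist(u,v)) = o(\dist(u,v)^2)$. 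The main obstacle lies in this second part: one must verify that the drift makes the hitting time genuinely linear rather than merely sub-quadratic, which amounts to checking both that the auxiliary vertices contribute only an $\Oh(1)$ delay factor per level and that the biased index-chain is absorbed at $d$ after $\Theta(d)$ transitions. By contrast, the first (inequality) part is comparatively routine once the distance-layer cutsets are chosen.
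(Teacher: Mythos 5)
Your proof of the inequality is exactly the paper's argument: distance-layer cutsets fed into the Nash--Williams bound (Proposition~\ref{prop:nash}), the arithmetic--harmonic mean step $\sum_{i=1}^{d}|\Pi_i|^{-1}\geq d^{2}/\sum_{i=1}^{d}|\Pi_i|\geq d^{2}/|E|$ (the disjointness and cutset properties of the layers being checked just as you do), and Theorem~\ref{thm:commuteresistance} to convert $\Res{u}{v}\geq \dist(u,v)^{2}/|E|$ into $\Com{u}{v}\geq 2\cdot\dist(u,v)^{2}$. Your separating example is also sound: the exponentially blown-up path induces an effective birth--death chain whose forward/backward conductance ratio $(1+2^{i-1})/(1+2^{i-2})$ stays bounded away from $1$, so the index chain is absorbed at $d$ after $\Theta(d)$ transitions, and the degree-two detours cost only a geometrically distributed, $\Oh(1)$-mean delay per transition, giving $\Hit{u}{v}=\Theta(\dist(u,v))=o(\dist(u,v)^{2})$.
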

We remark that Corollary~\ref{cor:diamsquared} is exact for paths (cf.~\cite{Lovasz93random}). Combining Corollary \ref{cor:diamsquared} with the known bounds from Theorem \ref{thm:feigelower} and Theorem \ref{thm:degreebound} yields:

\begin{proposition}\label{pro:simpleprop}
For any graph $G$ with maximum degree $\delta$, $  \mathcal{R}(G) = \Omega ( \frac{\sqrt{n}}{\Delta} \cdot \sqrt{\log n}).$
\end{proposition}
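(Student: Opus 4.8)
The plan is to combine a lower bound on the cover time with an upper bound on the broadcast time, both of which are immediate from the quoted results, and then to optimize the resulting ratio by a case distinction on the diameter. (Here I read the maximum degree in the statement as $\Delta$, since this is what appears in the bound.)

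First I would lower-bound $\COV(G)$ in two complementary ways. On one hand, Theorem~\ref{thm:feigelower} gives $\COV(G) \geq (1-o(1))\,n \ln n$. On the other hand, using the inequality $\COV(G) \geq \tfrac12 \max_{u,v \in V}\Com{u}{v}$ (stated just before Theorem~\ref{thm:feigelower}) together with Corollary~\ref{cor:diamsquared}, which yields $\max_{u,v}\Com{u}{v} \geq 2\max_{u,v}\dist(u,v)^2 = 2\,\diam(G)^2$, I obtain $\COV(G) \geq \diam(G)^2$. Hence $\COV(G) = \Omega(\max\{n\log n,\ \diam(G)^2\})$. For the denominator, since $\Ex{\pushpar{}{G}} = \Oh(\pushpar{n^{-1}}{G})$, Theorem~\ref{thm:degreebound} gives directly $\Ex{\pushpar{}{G}} = \Oh(\Delta\,(\log n + \diam(G)))$.

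Dividing the two bounds gives
\[
\mathcal{R}(G) = \Omega\!\left(\frac{\max\{n\log n,\ \diam(G)^2\}}{\Delta\,(\log n + \diam(G))}\right),
\]
and I would finish with a case distinction on $D := \diam(G)$. If $D \leq \sqrt{n\log n}$, then $\log n + D = \Oh(\sqrt{n\log n})$, and keeping the $n\log n$ term in the numerator yields $\mathcal{R}(G) = \Omega(n\log n / (\Delta\sqrt{n\log n})) = \Omega(\sqrt{n\log n}/\Delta)$. If instead $D \geq \sqrt{n\log n}$, then $D \geq \log n$ so $\log n + D = \Oh(D)$, and keeping the $\diam(G)^2$ term yields $\mathcal{R}(G) = \Omega(D^2/(\Delta D)) = \Omega(D/\Delta) = \Omega(\sqrt{n\log n}/\Delta)$. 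In both cases $\mathcal{R}(G) = \Omega(\tfrac{\sqrt n}{\Delta}\sqrt{\log n})$, as claimed.

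The arithmetic here is entirely routine; the only genuinely substantive point is the realization that the cover-time lower bound must be split between the two regimes. For small-diameter graphs the global estimate $n\log n$ carries the argument, whereas for large-diameter graphs the broadcast time can be as large as $\Delta\cdot\diam(G)$, so the $n\log n$ numerator alone would be too weak and one must instead invoke the $\diam(G)^2$ bound supplied by the commute-time/resistance estimate of Corollary~\ref{cor:diamsquared}. Balancing the two numerator bounds precisely at $D = \sqrt{n\log n}$ is what produces the stated exponent.
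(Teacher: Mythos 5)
Your proof is correct and takes exactly the route the paper indicates for this proposition: combining the commute-time lower bound $\COV(G) \geq \frac{1}{2}\max_{u,v}\Com{u}{v} \geq \diam(G)^2$ from Corollary~\ref{cor:diamsquared}, Feige's bound $\COV(G) = \Omega(n \log n)$ from Theorem~\ref{thm:feigelower}, and the broadcast upper bound $\Ex{\pushpar{}{G}} = \Oh(\Delta(\log n + \diam(G)))$ from Theorem~\ref{thm:degreebound}, balanced by a case distinction at $\diam(G) = \sqrt{n \log n}$. Your reading of the statement's ``maximum degree $\delta$'' as a typo for $\Delta$ is also the intended one.
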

As demonstrated by the $\sqrt{n} \times \sqrt{n}$-torus where $\pushpar{n^{-1}}{G} = \Theta(\sqrt{n})$ (by Theorem
\ref{thm:degreebound}) and $\COV(G) = \Theta(n \log^2 n)$ \cite{Zu92}, this bound is tight up to a factor of $\log^{3/2} n$ for bounded degree graphs.
The next result improves over Proposition~\ref{pro:simpleprop} for dense graphs.
\begin{theorem}\label{thm:online}
For any graph $G$ with $\Delta=\Oh(\delta)$, $ \mathcal{R}(G) = \Omega ( \frac{\sqrt{n}}{\sqrt{\delta}} \cdot \frac{1}{\log n} ).$
\end{theorem}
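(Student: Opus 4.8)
The plan is to bound $\ratio(G)=\COV(G)/\Ex{\pushpar{}{G}}$ from below by pairing lower bounds on the cover time with an upper bound on the broadcast time. For the numerator I would keep two estimates in play: the universal lower bound $\COV(G)\ge(1-o(1))\,n\ln n$ from Theorem~\ref{thm:feigelower}, and $\COV(G)\ge\frac12\max_{u,v}\Com{u}{v}\ge\diam(G)^2$, which is immediate from Corollary~\ref{cor:diamsquared}. For the denominator the general tool available is Theorem~\ref{thm:degreebound}; combined with $\Delta=\Oh(\delta)$ and $\Ex{\pushpar{}{G}}=\Oh(\pushpar{n^{-1}}{G})$ it yields $\Ex{\pushpar{}{G}}=\Oh(\delta\,(\log n+\diam(G)))$. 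I would also use the folklore fact that any connected graph of minimum degree $\delta$ has $\diam(G)=\Oh(n/\delta)$.

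With these in hand I would split on the density. For $\delta=\Oh(\log^{3}n)$ there is nothing new to do: Proposition~\ref{pro:simpleprop} already gives $\ratio(G)=\Omega(\frac{\sqrt n}{\delta}\sqrt{\log n})$, and this exceeds $\frac{\sqrt n}{\sqrt\delta\log n}$ exactly when $\delta=\Oh(\log^{3}n)$. For $\delta=\Omega(n/\log^{4}n)$ I would substitute $\diam(G)=\Oh(n/\delta)$ into the broadcast bound to get $\Ex{\pushpar{}{G}}=\Oh(\delta\log n+n)$, and pair it with $\COV(G)=\Omega(n\log n)$; a one-line calculation then gives $\ratio(G)=\Omega(\min\{\log n,\,n/\delta\})$, which dominates $\frac{\sqrt n}{\sqrt\delta\log n}$ over this whole dense range.

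The real work sits in the intermediate band $\log^{3}n\ll\delta\ll n/\log^{4}n$, and here I expect the main obstacle. The estimate $\Ex{\pushpar{}{G}}=\Oh(\delta\,\diam(G))$ is wildly lossy for near-regular graphs: a path of $\sqrt n$ cliques, each of size $\sqrt n$, has $\delta=\Theta(\sqrt n)$, $\diam(G)=\Theta(\sqrt n)$ and $\COV(G)=\Theta(n\log n)$, yet its broadcast time is only $\Theta(\sqrt n)$ because the rumor pipelines across consecutive cliques in $\Oh(1)$ steps each, not $\delta\,\diam(G)=\Theta(n)$. Using the crude bound loses a polynomial factor, and neither $\COV(G)\ge\diam(G)^2$ nor $\COV(G)\ge n\log n$ can make up the difference. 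The crux is thus to prove a sharper broadcast upper bound for graphs with $\Delta=\Oh(\delta)$: along the BFS layers around the source the rumor should cross a layer cut $\Pi_i$ in $\Oh(\delta/|\Pi_i|)$ expected steps, so when every cut is wide ($|\Pi_i|=\Omega(\delta)$) broadcast finishes in essentially $\diam(G)$ steps up to polylogarithmic factors, whereas when some cut is narrow, Proposition~\ref{prop:nash} forces a large resistance and hence, through $\COV(G)\ge|E|\max_{u,v}\Res{u}{v}$ (Theorem~\ref{thm:commuteresistance}), a correspondingly large cover time. Turning this trade-off into a rigorous bound is the delicate step, since the push process does not respect the layering and the dependence between successively informed layers must be controlled.
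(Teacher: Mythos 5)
Your two boundary regimes are correct, and I verified the arithmetic: for $\delta=\Oh(\log^{3}n)$ the bound of Proposition~\ref{pro:simpleprop} (with $\Delta=\Oh(\delta)$) dominates $\frac{\sqrt{n}}{\sqrt{\delta}\log n}$ precisely when $\delta\leq\log^{3}n$, and for $\delta=\Omega(n/\log^{4}n)$ the combination $\diam(G)=\Oh(n/\delta)$, Theorem~\ref{thm:degreebound}, $\Ex{\pushpar{}{G}}=\Oh(\pushpar{n^{-1}}{G})$ and Theorem~\ref{thm:feigelower} indeed gives $\ratio(G)=\Omega(\min\{\log n,\,n/\delta\})$, which suffices on that range. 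But both regimes are direct recombinations of results already quoted in the paper; the entire content of Theorem~\ref{thm:online} lives in the band $\log^{3}n\ll\delta\ll n/\log^{4}n$, and there you offer a program, not a proof --- you say so yourself. You also correctly diagnose why the crude route fails: optimizing $\max\{n\log n,\,D^{2}\}/(\delta(D+\log n))$ over $D=\diam(G)$ bottoms out at $D\approx\sqrt{n\log n}$ with value $\Omega(\sqrt{n\log n}/\delta)$, which misses the target by $\sqrt{\delta}/\log^{3/2}n$. So the verdict rests entirely on the unproven crossing lemma, and that gap is genuine. (For the record, the version of the paper at hand states Theorem~\ref{thm:online} without proof, so there is no in-source argument to match your route against; but the neighboring dense-case proof of Theorem~\ref{thm:dense} runs through blanket time and a walk/broadcast coupling, which suggests the intended machinery is of a rather different flavor than first-passage along BFS layers.)

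Moreover, the gap is not merely a technicality to be ground out, and here is a concrete reason. Suppose your crossing lemma held in the clean form you state: the rumor crosses each BFS-layer cut $\Pi_{i}$ in $\Oh(\delta/|\Pi_{i}|)$ steps, with polylogarithmic overhead for high-probability and union-bound bookkeeping. Summing along layers and applying Proposition~\ref{prop:nash} to $S:=\sum_{i}|\Pi_{i}|^{-1}\leq\Res{s}{v}$ would give $\Ex{\pushpar{}{G}}=\Oh\bigl((D+\delta S)\cdot\operatorname{polylog}(n)\bigr)$, while $\COV(G)\geq\max\bigl\{n\log n,\ D^{2},\ \tfrac{n\delta}{2}S\bigr\}$ by Theorem~\ref{thm:feigelower}, Corollary~\ref{cor:diamsquared} and Theorem~\ref{thm:commuteresistance}. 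Splitting on whether $\delta S\geq D$ then yields $\ratio(G)=\Omega\bigl(n/\operatorname{polylog}(n)\bigr)$ in the first case and, since $\max\{n\log n,D^{2}\}/D\geq\sqrt{n\log n}$ always, $\ratio(G)=\Omega\bigl(\sqrt{n}/\operatorname{polylog}(n)\bigr)$ in the second --- a bound with \emph{no} $\delta$-dependence at all. That is strictly stronger than Theorem~\ref{thm:online} and would close, up to polylogarithmic factors, exactly the polynomial gap between $\Omega(\sqrt{n/\delta})$ and the $\Oh(\sqrt{n}\log^{2}n)$ constructions that Section~\ref{subs:construction} and Figure~\ref{fig:regular} explicitly leave open. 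So either you have a substantial improvement over the paper, or --- far more plausibly --- the crossing lemma fails as stated. The failure mode is the one you flag but underestimate: the rate $|\Pi_{i}|/(C\delta)$ presumes essentially all informed-side endpoints of $\Pi_{i}$ already hold the rumor, whereas push delivers the rumor into a layer at a single vertex; a BFS layer can have internal diameter as large as $\Theta(n/\delta)$ and internal bottlenecks of its own, and the time to populate the endpoints of the next cut is charged to no cut in your layer-by-layer accounting. Until that intra-layer spreading is controlled (or the trade-off is set up so that slow intra-layer spreading itself forces resistance, which is a different and harder bookkeeping), the intermediate band --- and hence the theorem --- remains unproven.
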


\subsection{Dense Graphs}\label{subs:dense}

In this section we present results that are tailored for dense graphs, e.g., graphs with minimum degree $\Theta(n)$. Consider a random walk $X_0=s,X_1,\ldots $ on $G$ starting from
$s$. Denote the number of visits to $u$ until time $t$ as $W_{t}(s,u) := |\{ 0 \leq t' \leq t: ~ X_{t'}=u \}|.$

\begin{definition}[\cite{KKLV00,WZ96}]
Consider a graph $G=(V,E)$ and a random walk starting from $s \in V$. Let
\[ \BLA(s) := \Ex{\min \Bigl\{ t \in \N \, \mid \, ~\forall u \in V:~ \frac{1}{2} \cdot t \pi(u) \leq W_t(s,u) \leq 2 \cdot t \pi(u) \Bigr\}}. \] Then, the \emph{blanket time} of $G$ is defined as $\BLA(G) := \max_{s \in V} \BLA(s).$
\end{definition}


\begin{theorem}[\cite{KKLV00}]\label{thm:blanket}
For any graph $G=(V,E)$, $  \BLA(G) = \Oh(\COV(G) \cdot (\log \log n)^2).
$
\end{theorem}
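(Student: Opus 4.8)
The plan is to dispose of one direction trivially and to reduce the other to a single Gaussian extremal quantity. The inequality $\BLA(G) \geq \COV(G)$ is immediate, since the lower constraint $W_t(s,u) \geq \tfrac12 t\pi(u) > 0$ already forces every vertex to have been visited by time $t$. So the content is the upper bound, and the natural strategy is to show that \emph{both} $\COV(G)$ and $\BLA(G)$ are controlled, up to the stated $(\log\log n)^2$ factor, by one and the same geometric functional of $G$, so that they cannot differ by more than that factor.

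First I would pass to a continuous-time version of the walk and its local times $L_t^u$, and invoke the generalized second Ray--Knight (Eisenbaum--Kaspi--Marcus--Rosen--Shi) isomorphism theorem, which identifies the field $(L_{\tau}^u + \tfrac12 \eta_u^2)_u$, at the inverse local time $\tau$ of a suitable level $a$ at a base point, in distribution with $(\tfrac12(\eta_u + \sqrt{2a})^2)_u$, where $(\eta_u)_u$ is the Gaussian free field whose covariances are the Green function (effective resistances $\Res{\cdot}{\cdot}$) of $G$. Under this dictionary the event driving the cover time---all local times simultaneously positive---becomes a statement about $\min_u (\eta_u + \sqrt{2a})$, hence about $\sup_u \eta_u$, while the two-sided blanket constraint $\tfrac12 t\pi(u)\leq W_t(s,u)\leq 2t\pi(u)$ becomes the analogous two-sided control of the same field at a comparable level. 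This is the key reduction: cover time and blanket time are both read off from the fluctuations of the single process $(\eta_u)_u$.

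Next I would apply Talagrand's majorizing-measure theorem to the metric space $(V,d)$ with $d(u,v) = \sqrt{\Res{u}{v}}$, obtaining two-sided bounds on $\Ex{\sup_u \eta_u}$ by the generic-chaining functional $\gamma_2(V,d)$. Since this same $\gamma_2$ governs the levels at which the field clears the thresholds associated with both $\COV(G)$ and $\BLA(G)$, the two times are forced to be comparable; the $(\log\log n)^2$ loss enters when one quantifies the fluctuations of the Gaussian supremum around its mean and transfers the statement back from the continuous-time local-time picture to the discrete visit counts $W_t$, discretizing over dyadic levels.

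The main obstacle is precisely this last transfer together with the requirement of \emph{simultaneous} control over all $n$ vertices. A naive route---prove, for a fixed $u$, that $W_t(s,u)$ concentrates around $t\pi(u)$ via a renewal/excursion decomposition of the visits (inter-visit times of mean $1/\pi(u)$), then union-bound over $V$---costs a factor $\log n$ and yields only $\BLA(G)=\Oh(\COV(G)\cdot\log n)$. Beating $\log n$ down to $(\log\log n)^2$ is exactly what the chaining/majorizing-measure machinery buys, by handling the $n$ vertices at their true metric resolution in the resistance metric rather than as $n$ separate events; making that chaining interact correctly with the error terms produced by the isomorphism theorem is the delicate part of the argument.
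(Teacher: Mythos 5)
This theorem is not proved in the paper at all: it is quoted verbatim from Kahn, Kim, Lov\'asz and Vu \cite{KKLV00}, so there is no in-paper proof to match your attempt against. Measured against the actual proof in \cite{KKLV00}, your route is genuinely different. The KKLV argument is elementary and Gaussian-free: it compares both $\COV(G)$ and $\BLA(G)$ to the Matthews bound (hitting times against the logarithm of set sizes), and the $(\log\log n)^2$ factor arises concretely from a recursive concentration argument run over roughly $\log\log n$ scales. What you sketch instead --- the generalized second Ray--Knight isomorphism coupling local times to the Gaussian free field with resistance-metric covariance, plus Talagrand's majorizing-measure theorem --- is the program that Ding, Lee and Peres carried out a decade later, and which in fact yields the stronger conclusion $\BLA(G) = \Oh(\COV(G))$, resolving the Winkler--Zuckerman conjecture. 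So the approach is not wrong-headed; it is a known, harder road to a stronger theorem.

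As a proof, however, your sketch has real gaps exactly where the difficulty lives. First, the isomorphism theorem identifies the local-time field at the \emph{inverse local time} $\tau$ of a base point, not at a deterministic time $t$; the blanket condition $\frac{1}{2} t\pi(u) \leq W_t(s,u) \leq 2 t\pi(u)$ is stated at fixed times, so you must control the concentration of $\tau$ (and of the total elapsed time) before any transfer is possible, and you nowhere address this. Second, the two-sided blanket constraint requires uniform \emph{lower} bounds on all local times, i.e.\ showing that $(\eta_u + \sqrt{2a})^2$ stays bounded away from $0$ simultaneously for all $u$; this is a lower-deviation statement about the minimum of the shifted field, and it does not follow from majorizing-measure control of $\Ex{\sup_u \eta_u}$ --- the claim that ``the same $\gamma_2$ governs both thresholds'' is precisely the hard step in Ding--Lee--Peres, requiring Sudakov minoration at all scales and a delicate multi-scale analysis, and you assert it rather than prove it. Third, the $(\log\log n)^2$ factor is never actually extracted: your argument, if completed, would give a constant factor, and the sentence claiming the loss ``enters when one quantifies fluctuations'' is a placeholder, not a derivation. (Your easy direction $\BLA(G)\geq \COV(G)$ and the observation that a naive union bound gives only a $\log n$ loss are both correct.) In short: a legitimate research program pointing at a stronger result, but not a proof of the stated theorem, and structurally unrelated to the elementary argument of \cite{KKLV00} that the paper is citing.
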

\noindent We also require the following simple graph-theoretical lemma.
\begin{lemma}\label{lem:twocoverlemma}
For every graph $G$, there is a $2$-cover $X$ of $G$ with $|X| \leq \lceil \frac{n}{\delta} \rceil,$ \ie, there is a set $X \subseteq V$ such that for all $v \in V$ there is an $x \in X$ with $\dist(x,v) \leq 2$.
\end{lemma}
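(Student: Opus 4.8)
The plan is to exhibit the desired set directly as a maximal distance-$3$ packing and then bound its size by a volume (disjointness) argument. Concretely, I would let $X \subseteq V$ be a set of vertices that are pairwise at distance at least $3$ and that is \emph{maximal} with respect to this property. Such a set exists since $V$ is finite; one may build it greedily, starting from the empty set and repeatedly adding any vertex whose distance to every already-chosen vertex is at least $3$, until no such vertex remains.

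I then claim that $X$ is a $2$-cover. Fix any $v \in V$. If $v \in X$, then $\dist(v,v)=0 \leq 2$ and we are done. Otherwise $v \notin X$, and by maximality the enlarged set $X \cup \{v\}$ no longer has all pairwise distances at least $3$. Since the vertices of $X$ already satisfy this, the only possible offending pair must involve $v$, so there is some $x \in X$ with $\dist(x,v) \leq 2$. Hence every vertex of $G$ lies within distance $2$ of $X$, which is exactly the required covering property.

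It remains to bound $|X|$. For each $x \in X$ consider its closed neighbourhood $B(x) := \{x\} \cup N(\{x\})$, which contains $x$ together with its at least $\delta$ neighbours, so $|B(x)| \geq \delta + 1$. The key point is that these sets are pairwise disjoint: if some $w$ belonged to both $B(x)$ and $B(x')$ for distinct $x, x' \in X$, then $\dist(x,w) \leq 1$ and $\dist(x',w) \leq 1$, whence $\dist(x,x') \leq 2$ by the triangle inequality, contradicting that the vertices of $X$ are pairwise at distance at least $3$. Summing over these disjoint sets gives $n \geq \sum_{x \in X} |B(x)| \geq |X| \cdot (\delta+1)$, and therefore $|X| \leq \frac{n}{\delta+1} \leq \lceil \frac{n}{\delta} \rceil$.

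The argument is short and there is no genuinely hard step; the only thing requiring care is the choice of the packing radius. Using distance exactly $3$ is what makes both halves succeed simultaneously: maximality of a distance-$\geq 3$ set forces a $2$-cover, while that same distance-$\geq 3$ condition is precisely what is needed to guarantee that the radius-$1$ balls $B(x)$ are disjoint, so that the $\delta+1$ lower bound on each ball can be summed without any overcounting.
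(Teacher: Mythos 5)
Your proof is correct, and it is the standard argument this ``simple graph-theoretical lemma'' is intended to have (the proceedings version omits the proof): a maximal distance-$3$ packing is a $2$-cover by maximality, and the pairwise-disjoint closed neighbourhoods of size at least $\delta+1$ give $|X| \leq \frac{n}{\delta+1} \leq \lceil \frac{n}{\delta} \rceil$. In fact your bound $\frac{n}{\delta+1}$ is slightly stronger than the stated $\lceil \frac{n}{\delta} \rceil$, so nothing is missing.
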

Interestingly, it is known that there are graphs with minimum degree $\frac{n}{2}$ for which every $1$-cover (\ie, dominating set) is of size $\Theta(\log n)$~\cite{AS00}, while the lemma above shows that every such graph has a $2$-cover of constant size. We now prove the main result of Section~\ref{sec:lowerbound}.
\begin{theorem}\label{thm:dense}
For any graph with $\Delta=\Oh(\delta)$, $ \Ex{\pushpar{}{G}} = \Oh ( \frac{1}{\delta} \cdot \BLA(G) + \frac{n^2}{\delta^2} \cdot \log^2 n ). $
\end{theorem}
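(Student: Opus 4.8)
The plan is to bound the expected broadcast time from above by splitting the spreading into a \emph{global} phase, whose cost is governed by the blanket time, and a \emph{local} phase, whose cost is the additive $\frac{n^2}{\delta^2}\log^2 n$ term; throughout I would use the almost-regularity $\Delta=\Oh(\delta)$ to replace every degree by $\Theta(\delta)$ and the stationary mass $\pi(v)$ by $\Theta(1/n)$. Fix a $2$-cover $X$ with $|X|\le\lceil n/\delta\rceil$ from Lemma~\ref{lem:twocoverlemma}. Since $X$ is $2$-dominating, once every vertex of $X$ is informed every other vertex has an informed vertex within distance $2$, so it suffices to bound separately (i) the time until all of $X$ is informed and (ii) the subsequent time to fill in all distance-$\le 2$ balls around $X$.

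For the global phase (i), which should produce the $\frac{1}{\delta}\BLA(G)$ term, I would couple $\pushparalg$ with a single random walk. Running the walk for $\Theta(\BLA(G))$ steps, the definition of the blanket time (together with Markov's inequality, to turn the expectation into a constant-probability guarantee) ensures that every vertex, in particular every $x\in X$, is visited $\Theta(\BLA(G)/n)$ times. The key quantitative point is that a single push round performs $\Theta(|I_t|)$ informing attempts in parallel, whereas the walk advances the rumor along only one edge per step; since each informed vertex informs a \emph{fixed} neighbor at rate $\Theta(1/\delta)$, the parallel process spreads the rumor $\Omega(\delta)$ times faster than the walk-driven infection once $\Omega(\delta)$ vertices are informed. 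I would make this precise by dominating the walk-driven infection (which informs $x$ no later than the walk first reaches $x$ from an informed vertex, in the spirit of Lemma~\ref{lem:samedistribution}) by the real push process run for a $1/\delta$ fraction of the time, thereby concluding that $\Oh(\BLA(G)/\delta)$ rounds suffice to inform all of $X$.

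For the local phase (ii), producing the $\frac{n^2}{\delta^2}\log^2 n$ term, I would exploit almost-regularity to show that the $\Theta(n/\delta)$ balls of radius $2$ around $X$ saturate quickly through collaboration rather than by isolated shortest-path routing (which would cost a prohibitive $\Theta(\delta\log n)$ per vertex). Concretely, once a vertex is informed the number of its informed neighbors grows geometrically, so within each ball the informed set doubles every $\Oh(\log n)$ rounds until the ball is covered; a coupon-collector/union-bound argument over the $\Theta(n/\delta)$ balls, each contributing $\Oh(\frac{n}{\delta}\log n)$ rounds in the worst case, yields the claimed $\Oh(\frac{n^2}{\delta^2}\log^2 n)$ bound with probability $1-\Oh(n^{-1})$. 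Finally I would combine the two phases, convert the high-probability covering time into the expected broadcast time via $\Ex{\pushpar{}{G}}=\Oh(\pushpar{n^{-1}}{G})$, and absorb the initial ``ramp-up'' until $\Omega(\delta)$ vertices are informed into the local term using Theorem~\ref{thm:degreebound}.

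The main obstacle is the coupling in the global phase: the blanket time is a statement about the visit counts of \emph{one} trajectory, and transferring it into a bound on the genuinely parallel push process---correctly harvesting the factor $1/\delta$ from the $\Theta(\delta)$ simultaneous senders while controlling the dependencies between vertices that push in the same round, and dealing with the early regime in which fewer than $\delta$ vertices are informed---is the delicate step. The local phase is conceptually routine but requires care to ensure the doubling argument does not stall at bottleneck balls, which is exactly where $\Delta=\Oh(\delta)$ is indispensable.
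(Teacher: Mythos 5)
Your high-level decomposition (blanket-time-driven global phase plus a local fill-in phase over a $2$-cover of size $\lceil n/\delta\rceil$) resembles the paper's, but both phases contain genuine gaps, and the local one fails outright. You take the $2$-cover in the \emph{graph metric} of $G$ and claim that once a ball center is informed, its radius-$2$ ball fills in $\Oh(\frac{n}{\delta}\log n)$ rounds by a doubling argument. This is false in the dense regime the theorem targets: take two cliques of size $n/2$ joined by a single edge $\{x,b\}$ (here $\Delta=\Oh(\delta)$ and $\delta=\Theta(n)$). The ball $B_2(x)$ contains the entire second clique, but with only the first clique informed, the rumor crosses the bridge with probability $\Theta(1/n)$ per round, so filling $B_2(x)$ takes $\Theta(n)$ rounds --- vastly exceeding your local budget of $\Oh(\frac{n^2}{\delta^2}\log^2 n)=\Oh(\log^2 n)$. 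The doubling argument stalls exactly at such bottlenecks because pushes go to uniformly random neighbors, most of which may lie outside the ball, and a boundary vertex may have a single informed neighbor, costing $\Theta(\delta\log n)$ rounds on its own. The paper is careful to avoid precisely this: its Lemma~\ref{lem:localexpansion} only asserts that each $u$ quickly informs \emph{some} fixed set $Y(u)$ of size $\delta/12$ (not its $2$-ball), it adds the symmetry Lemma~\ref{lem:uvvusymmetry} (where $\Delta=\Oh(\delta)$ is used), and --- crucially --- it applies the $2$-cover Lemma~\ref{lem:twocoverlemma} to the \emph{auxiliary} graph $\widehat{G}$ whose edges encode fast mutual broadcast, not graph distance. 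Membership in a partition then genuinely certifies $\Oh(\frac{n}{\delta}\log^2 n)$ intra-partition informing time (via $\widehat{G}$-paths of length at most $4$), which is what your $G$-metric balls cannot certify.

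The global phase has a second gap: your claim that push spreads ``$\Omega(\delta)$ times faster than the walk-driven infection once $\Omega(\delta)$ vertices are informed,'' to be cashed in by dominating the walk-driven infection with push run for a $1/\delta$ fraction of time, is not a theorem and is not how the factor $1/\delta$ arises. The $\delta$ simultaneous pushers do not help the rumor follow one walk trajectory any faster. In the paper the factor is harvested combinatorially: the rumor only needs to hop between partitions, of which there are $k\leq\lceil n/\delta\rceil$; each hop along the directed partition graph $G'$ is realized by coupling the pushes of a single vertex $u$ with the walk's successive exits from $u$, and the blanket time guarantees the walk exits $u$ at most $\Oh(\frac{\deg(u)}{2|E|}\BLA(G))=\Oh(\BLA(G)/n)$ times, so $u$ needs only that many push rounds after being informed. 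The total is $\frac{n}{\delta}\cdot\bigl(\Oh(\frac{\BLA(G)}{n})+\Oh(\frac{n}{\delta}\log^2 n)\bigr)$, giving both terms of the bound, together with a probabilistic claim (with probability $1/2$, $G'$ contains a path from the source's partition to every other partition). Without the auxiliary-graph partition and this per-vertex exit-count accounting, your coupling neither controls path length (which could be $\Theta(n)$ hops along the raw trajectory, erasing any speedup) nor justifies the uniform $1/\delta$ time-scaling, so the delicate step you flag is indeed where the proof is missing, and the missing idea is the paper's partition structure built from broadcast-time proximity rather than graph distance.
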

The following corollary follows immediately from Theorem \ref{thm:dense} and Theorem \ref{thm:blanket}.
\begin{corollary}\label{cor:dense}
For any graph $G=(V,E)$ with $\Delta=\Oh(\delta)$ we have $ \mathcal{R}(G) = \Omega ( \frac{\delta^2}{n} \cdot \frac{1}{\log n} ). $
\end{corollary}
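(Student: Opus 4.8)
The plan is to read the bound off directly from the definition $\ratio(G) = \COV(G)/\Ex{\pushpar{}{G}}$: a lower bound on $\ratio(G)$ is exactly an upper bound on $\Ex{\pushpar{}{G}}$, and I want one of the shape $\Ex{\pushpar{}{G}} = \Oh(\frac{n \log n}{\delta^2} \cdot \COV(G))$, since this rearranges to $\ratio(G) = \Omega(\frac{\delta^2}{n}\cdot\frac{1}{\log n})$. As both non-trivial ingredients, Theorem~\ref{thm:dense} and Theorem~\ref{thm:blanket}, are already in hand, the remaining work is purely bookkeeping on the two terms produced by Theorem~\ref{thm:dense}.

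First I would substitute the blanket-time estimate $\BLA(G) = \Oh(\COV(G)\cdot(\log\log n)^2)$ of Theorem~\ref{thm:blanket} into Theorem~\ref{thm:dense}, obtaining
\[
 \Ex{\pushpar{}{G}} = \Oh\Bigl(\tfrac{1}{\delta}\cdot\COV(G)\cdot(\log\log n)^2 \;+\; \tfrac{n^2}{\delta^2}\cdot\log^2 n\Bigr),
\]
and then bound each summand by $\Oh(\frac{n\log n}{\delta^2}\cdot\COV(G))$. For the second summand I would spend Feige's universal lower bound $\COV(G) = \Omega(n\log n)$ from Theorem~\ref{thm:feigelower}: rewriting $\frac{n^2}{\delta^2}\log^2 n = \frac{n\log n}{\delta^2}\cdot(n\log n)$ and replacing the trailing factor $n\log n$ by $\Oh(\COV(G))$ gives the claim. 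For the first summand the trivial inequalities $\delta \leq n$ and $(\log\log n)^2 = \Oh(\log n)$ suffice, since they yield $\frac{(\log\log n)^2}{\delta} = \Oh(\frac{\log n}{\delta}) = \Oh(\frac{n\log n}{\delta^2})$; multiplying through by $\COV(G)$ finishes that term as well. Adding the two estimates gives $\Ex{\pushpar{}{G}} = \Oh(\frac{n\log n}{\delta^2}\cdot\COV(G))$, which is precisely the asserted bound on $\ratio(G)$.

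The only genuine decision here — and hence the single point I would flag as the ``hard part,'' though it is mild — is recognising that the additive error term $\frac{n^2}{\delta^2}\log^2 n$ in Theorem~\ref{thm:dense} carries no $\COV(G)$ factor a priori, yet can be folded into the cover-time-dependent bound precisely by invoking $\COV(G) = \Omega(n\log n)$. No coupling, first-passage-percolation or electrical-network argument is needed at this final step, all of that machinery having already been discharged in the proofs of the two cited theorems; once they are granted, the corollary follows by this two-line calculation.
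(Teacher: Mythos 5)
Your proposal is correct and follows exactly the route the paper intends: the paper states that Corollary~\ref{cor:dense} ``follows immediately'' from Theorems~\ref{thm:dense} and~\ref{thm:blanket}, and the only unstated ingredient is precisely the one you identify, namely absorbing the additive term $\frac{n^2}{\delta^2}\log^2 n$ via Feige's lower bound $\COV(G) = \Omega(n \log n)$ from Theorem~\ref{thm:feigelower}, together with the trivial estimates $\delta \leq n$ and $(\log\log n)^2 = \Oh(\log n)$ for the blanket-time term. Your bookkeeping on both summands checks out, so nothing is missing.
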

Combining Corollary~\ref{cor:dense} with Theorem~\ref{thm:uppercover} for graphs with minimum degree $\Theta(n)$, we see that the cover time equals the broadcast time multiplied by $n$ up to logarithmic factors. It is worth mentioning that for graphs with $\delta \geq \lfloor \frac{n}{2} \rfloor$, Chandra et al.~\cite[Theorem~3.3]{CRRST97} proved that $\COV(G)=\Theta(n \log n)$. As pointed out by the same authors, $\COV(G)$ may be between $n \log n$ and $\Theta(n^2)$ if $\delta < \lfloor \frac{n}{2} \rfloor$. Now, Corollary~\ref{cor:dense} provides a parameter (the broadcast time) that captures the cover time not only for $\delta \geq \lfloor \frac{n}{2} \rfloor$, but also for $\delta = \Omega(n)$.

\begin{proof}[Proof of Theorem \ref{thm:dense}]
Let us briefly describe the main idea of the proof. We first show that for every vertex $u$ there is a fixed (independent of a concrete execution of $\pushparalg$) set of vertices $Y(u) \subseteq V$ of size at least $\delta/12$ such that $u$ informs each vertex in $Y(u)$ within $\Oh((n/\delta) \cdot \log^2 n)$ steps with high probability. We then establish that if a vertex $u$ informs $v$ in $\Oh((n/\delta) \cdot \log^2 n)$ steps with high probability, then also $v$ informs $u$ in $\Oh((n/\delta) \cdot \log^2 n)$ steps with high probability. Using this fact and Lemma \ref{lem:twocoverlemma} we find that there is a partitioning of $V$ into a constant number of partitions with the following property: once a vertex in such a partition becomes informed, the whole partition becomes informed within $\Oh((n/\delta) \cdot \log^2 n)$ steps. Finally, we use a coupling between the random walk and the broadcast algorithm to show that if the random walk covers the whole graph quickly, then the rumor will also be quickly propagated from one partition to the other partitions. The formal proof follows.
\begin{lemma}\label{lem:localexpansion}
For each $u \in V$ there is a set $Y(u) \subseteq V$ (independent of the execution of $\pushparalg$) of size at least $\delta/12$ such that for every $v \in Y(u)$, $\pushpar{n^{-4}}{u,v} \leq 16 C_1 \frac{n}{\delta} \log^2 n ,$ where $C_1 > 0$ is some constant.
\end{lemma}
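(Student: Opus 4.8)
The plan is to build $Y(u)$ from the second neighbourhood of $u$ and to certify each of its members by a two-hop relay: $u$ first seeds many \emph{common} neighbours of the target, and these common neighbours then jointly push to the target. Throughout write $\Delta \le c\,\delta$ with $c=\Oh(1)$, and for a vertex $v$ let $C(v):=N(u)\cap N(v)$ be the common neighbours of $u$ and $v$. First I would construct $Y(u)$ by a double-counting argument on codegrees. Since $\sum_{v\in V} |C(v)| = \sum_{w\in N(u)} \deg(w) \ge \delta\cdot\deg(u)\ge \delta^2$, and the term $v=u$ contributes only $\deg(u)\le\Delta$, the vertices $v$ with $|C(v)|<\delta^2/(4n)$ account for less than $n\cdot\delta^2/(4n)=\delta^2/4$ of this sum; hence a total codegree of at least $\delta^2/2$ is carried by vertices of codegree at most $\Delta\le c\delta$ each, so at least $\delta/(2c)=\Omega(\delta)$ vertices $v$ satisfy $|C(v)|\ge \delta^2/(4n)$. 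I let $Y(u)$ be any $\delta/12$ of them (for the relevant range of $c$ this constant works; in general one obtains $\Omega(\delta)$ and adjusts $C_1$). Note that $Y(u)$ depends only on $G$, not on the execution of $\pushparalg$, as required.

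Now fix $v\in Y(u)$ and put $C:=C(v)$, so $|C|\ge\delta^2/(4n)$, and split the budget $T:=16C_1(n/\delta)\log^2 n$ into two halves. The final-hop step is routine: suppose that by time $T/2$ at least $k:=\Omega(|C|/\log n)=\Omega(\delta^2/(n\log n))$ vertices of $C$ are informed. Each such $w$ is a neighbour of $v$, so in each of the remaining $\ge T/2$ steps it pushes to $v$ with probability $\ge 1/\deg(w)\ge 1/(c\delta)$, independently across the vertices of $C$ and across steps. Hence the probability that $v$ is still uninformed at time $T$ is at most $\exp(-k\cdot(T/2)/(c\delta))$, and since $k\cdot(T/2)/(c\delta)=\Omega\bigl(\tfrac{C_1}{c}\log n\bigr)$ this is below $n^{-4}$ once $C_1$ is a sufficiently large constant, giving $\pushpar{n^{-4}}{u,v}\le T$.

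The main obstacle is the seeding claim: that within $T/2$ steps at least $\Omega(|C|/\log n)$ vertices of the \emph{fixed} set $C$ become informed, except with probability $o(n^{-4})$. I would establish this through two complementary mechanisms. When $u$ is not too dense, $u$'s own pushes already suffice: over $T/2$ steps $u$ pushes into $C$ an expected $(T/2)|C|/\deg(u)=\Omega(\log^2 n)$ times, and since $u$ chooses uniformly among its neighbours these land on $\Omega(\min\{\log^2 n,|C|\})$ distinct vertices of $C$, which is $\Omega(|C|/\log n)$ whenever $|C|=\Oh(\log^3 n)$. For denser $u$ (where $|C|$ is polynomially large) direct seeding is too slow, and I would instead invoke the early multiplicative growth of $\pushparalg$: while the informed set has size $o(\delta)$, the hypothesis $\Delta=\Oh(\delta)$ keeps the collision probability of a push bounded away from $1$, so the informed set grows by a constant factor per step and reaches a constant fraction of $N(u)$ within $\Oh(\log n)\le T/2$ steps. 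The genuinely delicate point — and where I expect the real work to lie — is to convert ``a constant fraction of $N(u)$ is informed'' into ``$\Omega(|C|/\log n)$ of the fixed, target-dependent set $C\subseteq N(u)$ is informed'', i.e.\ to decouple the random informed set from $C$. I would do this by first showing that each individual $w\in N(u)$ is informed within $\Oh(\log n)$ steps with at least constant probability, and then exploiting the negative correlation of the informing events over the vertices of $C$, together with a Chernoff-type bound, to conclude that $|I_{T/2}\cap C|$ concentrates around its mean $\Omega(|C|)$.
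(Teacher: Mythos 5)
Your skeleton (high-codegree targets, two-hop relay, final-hop Chernoff) is reasonable, and parts of it are sound: the double-counting gives $\Omega(\delta)$ vertices $v$ with $|N(u)\cap N(v)|\geq \delta^2/(4n)$, and the final-hop computation is correct once the seeding claim is granted. Your direct-seeding case in fact disposes of every $v$ with $|N(u)\cap N(v)|=\Oh(\log^3 n)$, hence of the whole lemma when $\delta=\Oh(\sqrt{n}\log^{3/2}n)$. But there is a genuine gap exactly where you say ``the real work'' lies, and that is the regime the lemma exists for: it feeds Theorem~\ref{thm:dense}, whose interesting range is $\delta=\Theta(n)$. Your first mechanism is a non sequitur: multiplicative growth gives $|I_t|=\Omega(\delta)$ after $\Oh(\log n)$ steps, but a bound on the \emph{size} of $I_t$ says nothing about its \emph{location}, so it does not follow that a constant fraction of $N(u)$ --- let alone of the fixed, target-dependent set $C$ --- is informed. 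Indeed the $\Oh(\log n)$-step claim is false in general under $\Delta=\Oh(\delta)$: attach the $\delta$ neighbours of $u$ one-per-clique around a ring of $n/\delta$ cliques of size $\delta$ joined by perfect matchings; after $t$ steps at most $\Oh(t)$ cliques have been seeded and each seed has propagated ring-distance $\Oh(t)$, so only $\Oh(\log^2 n)\ll\delta$ members of $N(u)$ are informed after $\Oh(\log n)$ steps, and one genuinely needs the full $\Theta((n/\delta)\log^2 n)$ budget for the rumor to percolate around the ring. Any correct proof must exploit that budget, not $\Oh(\log n)$ steps. (Separately, your construction yields $\delta/(2c)$ vertices, not $\delta/12$, when $\Delta\leq c\delta$ with $c>6$; this cannot be repaired by ``adjusting $C_1$'', since $C_1$ only enters the time bound, though downstream only $\Omega(\delta)$ is really used; and the band $\log^3 n\ll |C|\ll n^{\Omega(1)}$ falls between your two cases as written.)

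The second mechanism is unsound as stated: the events $\{w\in I_{T/2}\}$, $w\in C$, are increasing functions of the same underlying random choices, and by Harris/FKG-type reasoning for monotone processes one expects them to be \emph{positively} associated --- conditioning on one vertex of $C$ being informed early biases the execution toward fast spreading near $N(u)$, making the others more likely informed, not less. So a ``Chernoff via negative correlation'' bound is not available without proof, and I would expect the correlation claim to be false in general. The standard repair avoids joint correlations altogether: use monotonicity of the push process (if $I\subseteq J$, the process started from $J$ stochastically dominates the one from $I$) to restart and boost, so that $\pushpar{2^{-k}}{u,w}\leq k\cdot\pushpar{1/2}{u,w}$, turning a per-vertex \emph{constant} success probability within $\Theta(T/\log n)$ steps into probability $1-n^{-5}$ within $T/2$, and then finish with a union bound over $C$. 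But this shifts the entire burden onto proving $\Pro{w\in I_{\Theta((n/\delta)\log n)}}=\Omega(1)$ for each fixed $w\in N(u)$ in every graph with $\Delta=\Oh(\delta)$ --- which is essentially the substance of the lemma, and which your sketch assumes rather than proves in the dense case. As it stands, the proposal proves the lemma only in the sparse regime where it is not needed, and at the dense crux replaces the argument by an unjustified locality claim and an unavailable concentration tool.
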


\begin{lemma}\label{lem:uvvusymmetry}
For any two vertices $u,v$ in a graph $G$ with $\Delta=\Oh(\delta),$ $ \pushpar{n^{-4}}{v,u} \leq C_2 \cdot (\pushpar{n^{-4}}{u,v} + \log n),$ where $C_2 > 0$ is some constant.
\end{lemma}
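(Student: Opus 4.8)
The plan is to symmetrize the broadcast time by passing to the continuous model $\pushseqmodalg$ of Section~\ref{sec:upperbound}, which by Lemma~\ref{lem:samedistribution} has the same law as the directed first-passage percolation $\DFPP$, and which, unlike $\pushparalg$, is \emph{exactly} symmetric: $\DFPP(u,v)\stackrel{d}{=}\DFPP(v,u)$. The strategy is then a sandwich. First I transfer the hypothesis $\pushpar{n^{-4}}{u,v}=:T$ into a high-probability upper bound on $\DFPP(u,v)$ of order $(T+\log n)/\delta$; the gain of the factor $\delta$ reflects that each informed vertex fires at rate $\deg(\cdot)\geq\delta$ in $\pushseqmodalg$. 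Second, I use symmetry to obtain the same bound for $\DFPP(v,u)$. Third, I transfer back: a high-probability bound $S$ on $\DFPP(v,u)$ yields $\pushpar{}{v,u}=\Oh(\delta S+\log n)$. Composing, $\pushpar{n^{-4}}{v,u}=\Oh(\tfrac{\Delta}{\delta}(T+\log n)+\log n)$, and the two factors $\delta$ cancel precisely because $\Delta=\Oh(\delta)$, giving the claimed $C_2(T+\log n)$.

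The symmetry of $\DFPP$ is the clean conceptual core and I would establish it first. Let $\sigma$ be the involution on weight configurations that swaps the two directed copies of every edge, $(\sigma w)(a,b):=w(b,a)$; as a coordinate permutation of i.i.d.\ $\mathrm{Exp}(1)$ weights it is measure-preserving. For any directed path $P=(u=x_0\to\cdots\to x_m=v)$ its reversal $P^{R}=(v=x_m\to\cdots\to x_0=u)$ is a $v$-to-$u$ path with $\sum_{e\in P^{R}}(\sigma w)(e)=\sum_{i}w(x_{i-1},x_i)=\sum_{e\in P}w(e)$, and reversal is a bijection between $u$-to-$v$ and $v$-to-$u$ paths. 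Taking infima, $\DFPP(v,u)(\sigma w)=\DFPP(u,v)(w)$, so the two have the same distribution.

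For the forward transfer I would couple $\pushseqmodalg$ started at $u$ so that it replays the infection path produced by $\pushparalg$ (Observation~\ref{obs:minimalpathpar}): matching the targets of the $m$-th transmission of each vertex in the two models, whenever $\pushparalg$ informs $v$ at round $\rho_v\leq T$ along a path $u=c_0,\dots,c_k=v$, the continuous model informs $v$ no later than $\sum_i \mathrm{Gamma}(d_i,\deg c_i)$, where $d_i$ is the number of rounds $c_i$ waits on this path. Since the $c_i$ are distinct this is a sum of $\sum_i d_i=\rho_v\leq T$ independent exponentials of rate $\geq\delta$, hence stochastically below $\tfrac1\delta\,\mathrm{Gamma}(T,1)$; a Chernoff bound gives $\DFPP(u,v)\leq S:=C(T+\log n)/\delta$ with probability $1-\Oh(n^{-4})$, the $\log n$ covering small $T$. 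By the symmetry above, $\pro{\DFPP(v,u)>S}=\pro{\DFPP(u,v)>S}=\Oh(n^{-4})$.

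The backward transfer is where I expect the real work. I would bound $\pushpar{}{v,u}$ by the time $\pushparalg$ needs to traverse the $\DFPP$-geodesic $P^{\ast}=(v=c_0,\dots,c_k=u)$, namely $\sum_{i} g_i$ with $g_i$ the independent (distinct tails) $\mathrm{Geometric}(1/\deg c_i)$ delays. The key coupling is the edge-wise domination $g_e\leq\deg(\mathrm{tail}\,e)\cdot w_e\leq\Delta\,w_e$, valid because $\mathrm{Geometric}(1/d)\preceq d\cdot\mathrm{Exp}(1)$; applied along $P^{\ast}$ it yields $\sum_i g_i\leq\Delta\sum_{e\in P^{\ast}}w_e=\Delta\,\DFPP(v,u)\leq\Delta S=\Oh(\delta S)$. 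The delicate point, and the main obstacle, is to realize this domination \emph{simultaneously for all edges} while keeping each vertex's transmissions a genuine uniform choice sequence, since the per-round choices out of one vertex are mutually exclusive and $P^{\ast}$ itself depends on the weights $w_e$; I would handle this by coupling each vertex's choice process to the independent family $(w_{(a,b)})_{b\in N(a)}$ through the negative association of the first-occurrence times, so that $g_{(a,b)}\leq\deg(a)\,w_{(a,b)}$ holds for every $b$ at once. A secondary estimate controls the number of hops $k$ of $P^{\ast}$: by a union bound over the at most $\Delta^{\ell}$ directed paths of length $\ell$ together with the lower tail $\pro{\mathrm{Gamma}(\ell,1)\leq S}\leq(eS/\ell)^{\ell}$, no path longer than $K=\Oh(\delta S+\log n)$ has weight $\leq S$ with probability $1-\Oh(n^{-4})$, so $k\leq K$. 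Putting the pieces together, $\pushpar{}{v,u}=\Oh(\delta S+\log n)=\Oh(\tfrac{\Delta}{\delta}(T+\log n)+\log n)=\Oh(T+\log n)$ with probability $1-\Oh(n^{-4})$, which after adjusting constants is exactly $\pushpar{n^{-4}}{v,u}\leq C_2(\pushpar{n^{-4}}{u,v}+\log n)$.
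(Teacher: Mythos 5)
Your symmetrization core is sound: the involution argument giving $\DFPP(u,v)\stackrel{d}{=}\DFPP(v,u)$ is correct, and so is the forward transfer --- coupling $\pushparalg$ and $\pushseqmodalg$ by matching each vertex's $m$-th transmission target makes the continuous information time of $v$ at most a sum of $T$ independent exponentials of rate at least $\delta$ along the minimal path of Observation~\ref{obs:minimalpathpar}, so $\DFPP(u,v)\leq C(T+\log n)/\delta$ with probability $1-\Oh(n^{-4})$ follows by Chernoff. (The conference version you were given states this lemma without proof, so I judge the attempt on its merits.) The genuine gap is the backward transfer. The domination $\Geo(1/d)\preceq d\cdot\Exp(1)$ you invoke is false: a geometric is at least $1$ almost surely while $d\cdot\Exp(1)$ has positive mass on $(0,1)$, so $\pro{g>t}=1>e^{-t/d}$ for $t<1$. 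This is not a rounding quibble, because the $\DFPP$-geodesic selects exactly the atypically light edges: in a dense graph a typical geodesic edge has weight of order $1/\delta$ or smaller, and there $g_e\leq\Delta w_e$ would force $g_e<1$, which no coupling whatsoever can realize. Negative association cannot rescue the simultaneous version either: NA does not yield Strassen-type domination by the independent vector, and here it provably fails even after a $+1$ correction --- for $\deg(a)=2$ the two first-occurrence rounds always sum to at least $3$, whereas $(1+2w_1)+(1+2w_2)<3$ with probability $1-\tfrac{3}{2}e^{-1/2}>0$, so no coupling with $g_b\leq 1+2w_b$ for both neighbors at once exists.

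Quantitatively the route then collapses by a factor of $\delta$: each hop of the geodesic genuinely costs a geometric of mean $\deg(c_i)=\Theta(\delta)$ rounds, and the geodesic has up to $\Theta(\delta S)$ hops, so traversing it costs $\Theta(\delta^2 S)=\Theta(\delta(T+\log n))$. On the complete graph, $T=\Theta(\log n)$ and $S=\Theta(\log n/n)$, the geodesic has $\Theta(\log n)$ hops, and your traversal bound is $\Theta(n\log n)$ instead of the required $\Theta(\log n)$: the discrete process is fast because it floods in parallel, not because any single path is fast, and the cancellation of the two factors $\delta$ in your arithmetic exists only through the false edgewise domination. The step can be repaired by running your own forward coupling in reverse: along the continuous infection path of $u$ started at $v$, let $m_i$ be the index of the firing of $c_i$ that informs $c_{i+1}$; by Poisson thinning, conditional on the delays, $\sum_i m_i$ is $k$ plus a Poisson variable of mean at most $\Delta S$, hence $\Oh(\Delta S+\log n)$ with probability $1-\Oh(n^{-4})$ (your union bound over short paths can absorb the conditioning on the realized path), and under the matched-target coupling $\sum_i m_i$ rounds suffice for the discrete process, giving $\pushpar{}{v,u}=\Oh(\Delta S+\log n)=\Oh(T+\log n)$ as desired. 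Alternatively, one can avoid the continuous model entirely: reversing the round-indexed choices $N_{t,w}\mapsto N_{T+1-t,w}$ is a measure-preserving bijection identifying time-respecting push paths from $u$ to $v$ with time-respecting pull paths from $v$ to $u$, so the push probability $u\to v$ in $T$ rounds equals the pull probability $v\to u$ in $T$ rounds exactly; for $\Delta=\Oh(\delta)$ the per-contact success probabilities $1/\deg(\cdot)$ of pull and push differ only by a constant factor, and a per-step Bernoulli domination with Chernoff amplification converts pull into push at the cost of a constant factor plus the additive $\Oh(\log n)$, which is precisely the shape of the claimed bound.
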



Consider the undirected auxiliary graph $\widehat{G}=(\widehat{V},\widehat{E})$ defined as follows: $\widehat{V}:=V$ and $\{u,v\} \in \widehat{E}$ iff \[ \max \{ \pushpar{n^{-4}}{u,v}, \pushpar{n^{-4}}{v,u} \} \leq C_2 \cdot \left(16 C_1  \frac{n}{\delta}  \log^2 n + \log n \right). \] By the two lemmas above, $\delta(\widehat{G}) \geq \delta/12$. Hence Lemma \ref{lem:twocoverlemma} implies the existence of a $2$-cover $\{u_1,u_2,\ldots,u_{k}\}$, $k \leq \lceil n/\delta \rceil$, of $\widehat{G}$. Therefore, the sets $U_i := \{ v \in \widehat{V} ~|~\dist_{\widehat{G}}(v,u_i) \leq 2 \}, 1 \leq i \leq k$ form a (possibly non-disjoint) partitioning of $\widehat{V}.$ Take a disjoint partitioning $V_1, V_2, \ldots, V_k$ such that for every $1 \leq i \leq k$, $V_i \subseteq U_{i}$. Consider now the directed graph $G':=(V',E')$ with $V' := \{ V_1,
V_2, \ldots, V_{k}   \} $ and 
\begin{align*}
  E' &:= \left\{ (V_i,V_j) ~|~ \exists u \in V_i,~1 \leq t \leq 4 \cdot \frac{\deg(u)}{2|E|} \cdot \BLA(G): ~ N_{t,u} \in V_j \right\},
  \end{align*}
where $N_{t,u} \in N(u)$ is the vertex to which the random walk moves after the $t$-th visit of $u$.
\begin{claim}
Let $s \in V_{i}.$ With prob. $1/2$, there is a path from $V_i$ to every $V_j$ in $G'$.
\end{claim}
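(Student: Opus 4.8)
The plan is to show that a single event of probability at least $1/2$, defined purely in terms of the blanket time of the walk started at $s$, already forces the desired reachability in $G'$. The key point is that the threshold $4\cdot\frac{\deg(u)}{2|E|}\cdot\BLA(G)$ in the definition of $E'$ is tuned exactly to the number of visits the walk makes to $u$ before its blanket time, so that on this event the random walk trajectory, read off block by block, is itself a walk in $G'$.

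First I would introduce the blanket-time random variable $T:=\min\{t\in\N:\ \forall u\in V,\ \tfrac12\, t\,\pi(u)\le W_t(s,u)\le 2\,t\,\pi(u)\}$, so that $\Ex{T}=\BLA(s)\le\BLA(G)$ by definition of $\BLA(G)$. Markov's inequality then gives $\Pro{T\ge 2\BLA(G)}\le \Ex{T}/(2\BLA(G))\le 1/2$, so the event $A:=\{T\le 2\BLA(G)\}$ satisfies $\Pro{A}\ge 1/2$. For a vertex $x$ let $b(x)$ denote the unique block $V_\ell$ with $x\in V_\ell$ (recall the $V_1,\dots,V_k$ are disjoint and cover $V$).

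On the event $A$, the blanket condition at time $T$ bounds every visit count by $W_T(s,u)\le 2\,T\,\pi(u)\le 4\,\BLA(G)\,\pi(u)=4\cdot\tfrac{\deg(u)}{2|E|}\cdot\BLA(G)$, using $\pi(u)=\deg(u)/(2|E|)$. Hence I would argue that every transition of the walk before time $T$ realizes an edge of $G'$: if $X_{t'}=u$ with $t'<T$, then this step is the $m$-th visit of $u$ for some $m=W_{t'}(s,u)\le W_T(s,u)\le 4\cdot\tfrac{\deg(u)}{2|E|}\cdot\BLA(G)$, and the walk moves to $X_{t'+1}=N_{m,u}$, which by the definition of $E'$ puts the directed edge $(b(X_{t'}),b(X_{t'+1}))$ into $G'$. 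Moreover $A$ forces the walk to visit every vertex by time $T$, since $W_T(s,u)\ge\tfrac12 T\pi(u)>0$ for all $u$ (this is just the standard fact that the blanket time dominates the cover time).

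Finally, fixing an arbitrary $j$, I would take the first time $\tau_j\le T$ at which the walk enters $V_j$; such a time exists because all vertices are visited by time $T$. Then $b(X_0),b(X_1),\dots,b(X_{\tau_j})$ is a sequence of blocks in which, by the previous paragraph, each consecutive pair is an edge of $E'$; it starts at $b(s)=V_i$ and ends in $V_j$, so after collapsing repeated blocks it is a directed path from $V_i$ to $V_j$ in $G'$. Since $j$ is arbitrary, on $A$ there is a path from $V_i$ to every $V_j$, which is what the claim asserts. The step I expect to require the most care is the constant-matching in the third paragraph: one must check that the visit bound obtained on $A$ never exceeds the cutoff in the definition of $E'$, which is precisely where the factor $2$ from the blanket condition and the factor $2$ from Markov multiply to the factor $4$ appearing in $E'$.
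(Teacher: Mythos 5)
Your proof is correct and is essentially the argument the paper intends: apply Markov's inequality to the blanket time to get $T\le 2\,\BLA(G)$ with probability at least $1/2$ (exactly the source of the factor $4=2\cdot 2$ in the cutoff defining $E'$ and of the $1/2$ in the paper's later union bound $\Pro{\mathcal{A}\wedge\mathcal{B}}\ge 1-\frac{1}{2}-4n^{-2}$), and then project the walk's trajectory up to time $T$ onto the blocks, noting that the blanket condition both keeps every visit count below the cutoff and forces the walk to cover all of $V$ by time $T$.
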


Reconsider now the partitioning $V_1,V_2, \ldots, V_{k},$ $k \leq \lceil n/\delta \rceil,$ of $\widehat{V}=V$. Let $\parti{u}$ be the function which assigns a vertex $u$ the index of its partition. Let $\mathcal{B}$ be the event that $
 \forall u \in V: V_{\parti{u}} \subseteq I_{\pushpar{}{s,u}+\Oh (\frac{n}{\delta} \log^2 n )}
$ holds, \ie, for all $u \in V$ it holds that once $u$ is informed, the partition $V_{\parti{u}}$ becomes completely informed within further $\Oh \left(\frac{n}{\delta} \log^2 n \right)$ steps. Fix some arbitrary vertex $u \in V$ and consider another vertex $w \in V_{\parti{u}}$. By definition of $\widehat{G}$ and Lemma~\ref{lem:uvvusymmetry}, there is path of length at most $4$ from $u$ to $w$ in $\widehat{G}$. Hence once $u$ is informed, $w$ becomes informed within the next $\Oh(\frac{n}{\delta} \log^2 n)$ steps with probability $1-4 n^{-4}$. Applying the union bound over $u \in V$ and $w \in V_{\parti{u}}$, we get $\Pro{ \mathcal{B} } \geq 1 - 4 n^{-2}$.
\begin{claim}
Conditioned on the events $\mathcal{A}$ and $\mathcal{B}$, all vertices of $G$ become informed after
$
 \Oh \left(   \frac{1}{\delta} \cdot \BLA(G) + \frac{n^2}{\delta^2} \cdot \log^2 n \right)
$
steps.
\end{claim}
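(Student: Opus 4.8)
The plan is to chain together the two conditioned events: $\mathcal{A}$ supplies, for every target partition $V_j$, a directed path $V_{\parti{s}}=V_{i_0}\to V_{i_1}\to\cdots\to V_{i_l}=V_j$ in the auxiliary graph $G'$, while $\mathcal{B}$ guarantees that the moment one vertex of a partition is informed, the whole partition is informed within $\Oh(\frac{n}{\delta}\log^2 n)$ further steps. I would show that traversing each edge of such a path costs only $\Oh(\frac{n}{\delta}\log^2 n+\frac{1}{n}\BLA(G))$ steps, and then multiply by the number of hops, which is at most $k\leq\lceil n/\delta\rceil$ since a simple path visits distinct partitions.

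The first step is to cash out an $E'$-edge as a broadcast guarantee via the coupling underlying the definition of $G'$. The random walk produces, for each $u$, the sequence $N_{1,u},N_{2,u},\ldots$ of successive exit-neighbors, and in $\pushparalg$ the vertex $u$ transmits along exactly this sequence; both are uniform neighbor choices, so this is a legitimate coupling. Hence, if $u$ is informed at time $\tau$, then at time $\tau+t$ it transmits to $N_{t,u}$. Consequently an edge $(V_i,V_j)\in E'$ — which asserts that some $u\in V_i$ has $N_{t,u}\in V_j$ for some $t\leq 4\pi(u)\BLA(G)$ — means precisely that once $u$ is informed, a vertex of $V_j$ is informed within $4\pi(u)\BLA(G)$ additional steps. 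Using $\deg(u)\leq\Delta=\Oh(\delta)$ and $|E|\geq n\delta/2$, I would bound $4\pi(u)\BLA(G)=4\frac{\deg(u)}{2|E|}\BLA(G)=\Oh(\frac{1}{n}\BLA(G))$, so each inter-partition hop costs only $\Oh(\frac{1}{n}\BLA(G))$ steps.

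With this per-hop cost in hand, I would induct along the path. For the base case, since $s$ is informed at time $0$, event $\mathcal{B}$ informs all of $V_{i_0}$ within $\Oh(\frac{n}{\delta}\log^2 n)$ steps. For the inductive step, once $V_{i_p}$ is fully informed the vertex $u$ carrying the edge $(V_{i_p},V_{i_{p+1}})$ is among the informed vertices, so a vertex of $V_{i_{p+1}}$ becomes informed after a further $\Oh(\frac{1}{n}\BLA(G))$ steps, and a second application of $\mathcal{B}$ then infects all of $V_{i_{p+1}}$ within another $\Oh(\frac{n}{\delta}\log^2 n)$ steps. The costs therefore accumulate additively, so $V_j$ is fully informed by time
\[
  \lceil n/\delta\rceil\cdot\Oh\Bigl(\tfrac{n}{\delta}\log^2 n+\tfrac{1}{n}\BLA(G)\Bigr)=\Oh\Bigl(\tfrac{n^2}{\delta^2}\log^2 n+\tfrac{1}{\delta}\BLA(G)\Bigr).
\]
Taking the maximum over all target vertices $v$ (equivalently, over all partitions $V_j$) yields the claimed time to inform every vertex of $G$.

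The point requiring the most care is the accounting: one must verify that along the path the intra-partition spreading of $\mathcal{B}$ and the inter-partition hops of $G'$ compose \emph{additively} per hop — so that the total is (number of hops) $\times$ (per-hop cost) rather than a composition that couples the two scales multiplicatively and inflates the bound. This hinges on the fact that $E'$, the guarantee of $\mathcal{B}$, and the per-hop transmission counts $4\pi(u)\BLA(G)$ are all read off the same fixed random walk under a single coupling; thus conditioning on $\mathcal{A}\cap\mathcal{B}$ fixes a broadcast execution in which each hop's realizing transmission provably lies among $u$'s first $4\pi(u)\BLA(G)$ sends once $u$ is informed, and the additive accumulation is exactly what keeps the total at $\Oh(\frac{n^2}{\delta^2}\log^2 n+\frac{1}{\delta}\BLA(G))$.
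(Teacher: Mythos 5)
Your proposal is correct and takes essentially the same route the paper intends: couple the broadcast's transmission sequence at each vertex $u$ with the random walk's exit sequence $(N_{t,u})_{t}$, so that every $E'$-edge is traversed within $4\pi(u)\cdot\BLA(G)=\Oh(\BLA(G)/n)$ steps of its tail vertex being informed (using $\Delta=\Oh(\delta)$ and $2|E|\geq n\delta$), and then chain along the $G'$-path supplied by $\mathcal{A}$, paying the $\Oh(\frac{n}{\delta}\log^2 n)$ intra-partition spread from $\mathcal{B}$ plus the hop cost additively over at most $\lceil n/\delta\rceil$ hops. Your per-hop accounting and the final total $\Oh(\frac{1}{\delta}\cdot\BLA(G)+\frac{n^2}{\delta^2}\cdot\log^2 n)$ match the paper's argument exactly.
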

To finish the proof of the Theorem, we apply the union bound to get $ \Pro{ \mathcal{A} \wedge \mathcal{B} } \geq 1 - \frac{1}{2} - 4 n^{-2}. $ So, with probability larger than $1/3$, all vertices of $G$ become informed after at most $\Oh (\frac{1}{\delta} \cdot \BLA(G) + \frac{n^2}{\delta^2} \cdot \log^2 n )$ steps. Thus for every $k \in \mathbb{N}$, we succeed after $\Oh (k \cdot (\frac{1}{\delta} \cdot \BLA(G) + \frac{n^2}{\delta^2} \cdot \log^2 n) )$ steps with probability $1-(2/3)^k$ and hence the expected broadcast time is $\Oh (\frac{1}{\delta} \cdot \BLA(G) + \frac{n^2}{\delta^2} \cdot \log^2 n )$.
\end{proof}

\subsection{Discussion} \label{subs:construction}

We first complement the lower bounds on $\mathcal{R}(G)$ by some concrete graphs.
By a construction based on Harary graphs~\cite{Har62} and the two-dim. torus we obtain the following.
\begin{theorem}\label{thm:construction}
For any $\sqrt{n} \leq d \leq n-1$, there is a $d$-regular graph $G$ with $\mathcal{R}(G) = \Oh( d \cdot \log n)$. Moreover, for any $1 \leq d \leq \sqrt{n}$ there is a graph with minimum degree $d$ and maximum degree $d+1$ such that $\mathcal{R}(G) = \Oh( \sqrt{n} \cdot \log^2 n)$.
\end{theorem}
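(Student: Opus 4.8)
The plan is to split at $d=\sqrt n$ and use one explicit graph in each regime, bounding $\ratio(G)=\COV(G)/\Ex{\pushpar{}{G}}$ by pairing an upper bound on the cover time with a lower bound on the broadcast time. For the latter I would always use that broadcasting cannot terminate before the rumour has reached the farthest vertex, i.e.\ $\Ex{\pushpar{}{G}}\geq\diam(G)$; the substantive part is to bound $\COV(G)$ from above as sharply as the target requires.

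In the dense regime $\sqrt n\leq d\leq n-1$ I would take $G$ to be the $d$-regular Harary circulant on $\Z_n$, joining each vertex to its $\lfloor d/2\rfloor$ nearest neighbours on either side (adding the antipodal matching when $d$ is odd). Its diameter is $\Theta(n/d)$, giving $\Ex{\pushpar{}{G}}=\Omega(n/d)$, so it suffices to prove $\COV(G)=\Oh(n\log n)$, from which $\ratio(G)=\Oh(n\log n)/\Omega(n/d)=\Oh(d\log n)$ follows. The obstacle is that the commute-time route is useless here: the antipodal resistance is $\Theta(n/d^2)$, so $\max_{u,v}\Com{u}{v}=2|E|\,\Res{u}{v}=\Theta(n^2/d)$ by Theorem~\ref{thm:commuteresistance}, and the cover bound $\COV=\Oh(\log n\cdot\max_{u,v}\Com{u}{v})$ of~\cite{CRRST97} yields only the far-too-large $\Oh(n^2\log n/d)$. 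Instead I would exploit the local density of the circulant directly: partition $\Z_n$ into $\Theta(n/d)$ consecutive blocks of $\Theta(d)$ vertices, observe that adjacent blocks are densely joined so that the block index performs an unbiased, non-lazy walk on a cycle of length $\Theta(n/d)$, and that each step lands at an essentially uniform vertex of the current block. Covering then splits into equidistributing the block index, which costs only $\Oh((n/d)^2)=\Oh(n)$ steps since $d\geq\sqrt n$, and a coupon-collector over the $\Theta(d)$ vertices inside each block; as the $T$ block-visits accumulated in $T$ steps spread evenly over the $\Theta(n/d)$ blocks, each block is visited $\Theta(Td/n)$ times, and $T=\Oh(n\log n)$ already makes this $\Theta(d\log n)$, the coupon threshold for a block.

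In the sparse regime $1\leq d\leq\sqrt n$ I would use a Harary-on-a-torus graph: $\Theta(\sqrt n)$ levels arranged in a cycle, $\Theta(\sqrt n)$ vertices per level, consecutive levels joined by a $\Theta(d)$-regular bipartite Harary graph (an explicit expander). This has $n$ vertices, minimum degree $d$ and maximum degree $d+1$, and since two far levels are $\Omega(\sqrt n)$ apart it has diameter $\Theta(\sqrt n)$, hence $\Ex{\pushpar{}{G}}=\Omega(\sqrt n)$. Here, unlike the dense case, the commute-time route works because the resistances are small: routing a unit current from a source to a target $\Theta(\sqrt n)$ levels away by splitting it evenly across the $\Theta(\sqrt n)$ vertices of each intermediate level (which the expanding bipartite layers permit, up to a polylogarithmic overhead for the local spreading) gives an energy, and hence a resistance, of $\Res{u}{v}=\Oh(\log n/d)$. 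Then $\max_{u,v}\Com{u}{v}=2|E|\,\Res{u}{v}=\Oh(n\log n)$ by Theorem~\ref{thm:commuteresistance}, and $\COV(G)=\Oh(\log n\cdot\max_{u,v}\Com{u}{v})=\Oh(n\log^2 n)$ by~\cite{CRRST97}, so $\ratio(G)=\Oh(n\log^2 n)/\Omega(\sqrt n)=\Oh(\sqrt n\log^2 n)$.

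The main obstacle is the dense case. The circulant has genuinely large worst-case commute times, so no resistance-based cover bound can reach $\COV=\Oh(n\log n)$; one must instead prove that fast mixing inside each $\Theta(d)$-block turns covering into a near-optimal coupon-collector problem, which requires controlling the dependence between successive visits to a block. A secondary technical point, in the sparse case, is to verify that the bipartite Harary layers expand well enough that ``contract each level to a point'' is accurate up to polylogarithmic factors; choosing explicit circulant bipartite graphs with $\Theta(d)$ generators is a convenient way to make this rigorous.
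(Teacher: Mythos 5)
Your dense-regime plan is derailed by a resistance miscalculation, and this matters because it is what led you to discard the easy route. For the $d$-regular circulant with generators $\pm\{1,\dots,\lfloor d/2\rfloor\}$, the number of edges crossing any fixed point of the cycle is $\sum_{j=1}^{d/2} j=\Theta(d^2)$, not $\Theta(d)$; hence the antipodal resistance is $\Theta(n/d^3)$, not $\Theta(n/d^2)$ (upper bound: route flow $\Theta(1/d^2)$ on every edge of each arc; matching lower bound: apply Proposition~\ref{prop:nash} with $\Theta(n/d)$ disjoint cutsets of size $\Theta(d^2)$ spaced $d/2$ apart). Consequently $\max_{u,v}\Com{u}{v}=2|E|\cdot\Theta(n/d^3)=\Theta(n^2/d^2)$ by Theorem~\ref{thm:commuteresistance}, which is $\Oh(n)$ exactly in your regime $d\geq\sqrt{n}$, and the bound $\COV(G)\leq e^3\max_{u,v}\Com{u}{v}\ln n+n$ quoted in Section~\ref{sec:randomwalk} yields $\COV(G)=\Oh(n\log n)$ immediately; together with $\Ex{\pushpar{}{G}}=\Omega(n/d)$ this gives $\ratio(G)=\Oh(d\log n)$ with no further machinery. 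So your assertion that ``no resistance-based cover bound can reach $\COV(G)=\Oh(n\log n)$'' is false, and the block/coupon-collector argument you substitute for it --- which you yourself flag as the main obstacle --- is both unnecessary and, as sketched, incomplete: a step does not land ``essentially uniformly in the current block'' but uniformly in a window of width $d$ centered at the current position, and the claim that the block-visits ``spread evenly'' over $\Theta(n/d)$ blocks is precisely a blanket-time/local-time concentration statement (uniformly over all windows, with the dependence between visits controlled) that you assert rather than prove. As written, the dense case is a genuine gap, though one that closes itself once the resistance is computed correctly.

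The sparse case is sound in outline and the numbers come out right, but the justification offered for $\Res{u}{v}=\Oh(\log n/d)$ is wrong for small $d$: bipartite circulants with $\Theta(d)$ nearest-neighbor generators --- your proposed explicit instantiation --- are \emph{not} expanders for bounded $d$ (an interval of columns grows only additively per level), so spreading current over a whole level takes $\Theta(\sqrt{n}/d)$ levels, not $\Oh(\log_d\sqrt{n})$. Fortunately expansion is not needed: use a lattice-style flow in which after $k$ levels the current occupies $\Theta(kd)$ columns, costing energy $\Theta(1/(kd^2))$ at stage $k$, plus a middle section with current uniform over each level ($\Theta(\sqrt{n}\,d)$ edges per crossing, $\Theta(\sqrt{n})$ crossings) costing $\Theta(1/d)$; summing gives $\Res{u}{v}=\Oh(1/d+\log n/d^2)$, after which your computation $\Com{u}{v}=\Oh(n\log n)$, $\COV(G)=\Oh(n\log^2 n)$, $\ratio(G)=\Oh(\sqrt{n}\log^2 n)$ goes through. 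One caveat: for $d\in\{1,2\}$ your layered graph degenerates into a cycle, with $\ratio(G)=\Theta(n)$, so the low end of the stated degree range needs separate (torus-style) care. Note that the paper itself omits the proof, recording only that it is ``based on Harary graphs and the two-dimensional torus'' --- the same two building blocks you chose, so your architecture matches the intended one even though both of your analytic shortcuts need the repairs above.
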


While for certain degrees, a small polynomial gap remains between the examples of Theorem~\ref{thm:construction} and the bounds of Theorem~\ref{thm:online} and Theorem~\ref{thm:dense}~(cf.~Figure~\ref{fig:regular}), the quotient between cover time and diameter is minimized up to logarithmic factors by these examples.

\begin{proposition}\label{pro:coverdiam}
For any graph $G$ with $\Delta=\Oh(\delta)$, $ \frac{\COV(G)}{\diam(G)} = \Omega( \max \{ \sqrt{n},\delta \} \cdot \sqrt{\log n}).$
\end{proposition}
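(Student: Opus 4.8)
The plan is to prove the two factors inside $\max\{\sqrt{n},\delta\}$ separately and then take the maximum, relying only on Feige's lower bound (Theorem~\ref{thm:feigelower}), the commute-time estimate $\Com{u}{v}\geq 2\dist(u,v)^2$ of Corollary~\ref{cor:diamsquared}, and an elementary diameter bound. The two driving inequalities are the following. First, since $\COV(G)\geq\frac12\max_{u,v}\Com{u}{v}$, applying Corollary~\ref{cor:diamsquared} to a diametral pair $u,v$ with $\dist(u,v)=\diam(G)$ yields $\COV(G)\geq\diam(G)^2$. Second, Theorem~\ref{thm:feigelower} gives $\COV(G)=\Omega(n\log n)$ for every graph.

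For the $\sqrt{n}$-part I would split on the size of the diameter. If $\diam(G)\geq\sqrt{n\log n}$, then $\COV(G)/\diam(G)\geq\diam(G)\geq\sqrt{n\log n}$ directly from $\COV(G)\geq\diam(G)^2$. If instead $\diam(G)\leq\sqrt{n\log n}$, then Feige's bound gives $\COV(G)/\diam(G)=\Omega\bigl(n\log n/\sqrt{n\log n}\bigr)=\Omega(\sqrt{n\log n})$. In either case $\COV(G)/\diam(G)=\Omega(\sqrt{n}\cdot\sqrt{\log n})$.

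For the $\delta$-part I would first establish $\diam(G)=\Oh(n/\delta)$ by a ball-packing argument. Along a shortest path realizing the diameter, consider the closed radius-$1$ balls around every third vertex; two such balls sharing a vertex would put two path-vertices at mutual distance $\leq 2$, contradicting that consecutive chosen vertices lie $\geq 3$ apart on a shortest path, so the balls are pairwise disjoint. Each ball has at least $\delta+1$ vertices and there are roughly $\diam(G)/3$ of them, whence $n\geq(\diam(G)/3)\cdot\delta$ and $\diam(G)=\Oh(n/\delta)$. Combining this with $\COV(G)=\Omega(n\log n)$ gives $\COV(G)/\diam(G)=\Omega(\delta\log n)$, which is even stronger than the required $\Omega(\delta\sqrt{\log n})$.

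Since the ratio is simultaneously $\Omega(\sqrt{n}\cdot\sqrt{\log n})$ and $\Omega(\delta\cdot\sqrt{\log n})$, it is $\Omega(\max\{\sqrt{n},\delta\}\cdot\sqrt{\log n})$, as claimed. I do not expect a genuine obstacle here: the whole argument is an assembly of the quoted inequalities, and in fact it uses only the minimum degree, so the hypothesis $\Delta=\Oh(\delta)$ is not even needed. The only mildly nontrivial ingredient is the self-contained diameter bound $\diam(G)=\Oh(n/\delta)$, and the one point requiring care is getting the case distinction in the $\sqrt{n}$-part to cover both regimes cleanly.
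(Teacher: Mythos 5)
Your proof is correct and, so far as can be judged (this proceedings version states Proposition~\ref{pro:coverdiam} without printing a proof), it follows exactly the route the paper sets up: $\COV(G)\geq\diam(G)^2$ via Corollary~\ref{cor:diamsquared} and the commute-time lower bound on $\COV(G)$, balanced against Feige's bound $\COV(G)=\Omega(n\log n)$ from Theorem~\ref{thm:feigelower}, together with the standard ball-packing estimate $\diam(G)=\Oh(n/\delta)$. Your side observations are also accurate: the $\delta$-term in fact comes out as the stronger $\Omega(\delta\log n)$, and the hypothesis $\Delta=\Oh(\delta)$ is never used, since every ingredient depends only on the minimum degree.
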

So far, in all considered graphs with a (nearly) optimal cover times and high broadcast time, the latter was caused by a large diameter. Therefore, one could try to throw in the lower bounds on $\diam(G)$ and ask the following question: Does $\COV(G)=\Oh(\operatorname{polylog}(n) \cdot \max\{n \log n, \diam(G)^2 \}) \Leftrightarrow \Ex{\pushpar{}{G}} = \Oh(\operatorname{polylog}(n) \cdot \max\{ \diam(G), \log n\})$ hold? The answer is that both directions can be refuted by counter-examples, even for graphs where minimum and maximum degree coincide (up to constant factors).

\section{Conclusion}\label{sec:conclude}
Inspired by the intuition of Chandra et al.~\cite{CRRST97} about the relationship between cover time of random walks and the runtime of randomized broadcast, we devised the first formal results relating both times. 
As our main result in Section~\ref{sec:upperbound}, we proved that the cover time of any graph $G$ is upper bounded by $\Oh(\frac{E}{\delta} \log n)$ times the broadcast time. This result is tight for many graphs (at least up to a factor of $\log n$) and gives an upper bound on the cover time that is at least as good (and in certain cases much tighter than) the previous bound based on the spectral gap~\cite{BK89}. Moreover, this result implies several classic bounds on the cover time and an almost optimal upper bound on the cover time of certain random graphs that are used to model real world networks.
In Section~\ref{sec:lowerbound} we derived lower bounds on the ratio between the cover time and broadcast time. Together with our upper bound of Section~\ref{sec:upperbound}, we established a surprisingly strong correspondence between the cover time and broadcast time on dense graphs. This positive result was complemented by the construction of certain graphs to demonstrate that this strong correspondence cannot be extended to sparser graphs. 
Nevertheless, our lower and upper bounds show that the relationship between cover time and broadcast time is substantially stronger than the relationship between any of these parameters and the mixing time (or the closely related spectral gap). In particular, our findings provide evidence for the following hierarchy for regular graphs:
\[
 \mbox{low mixing time} \Rightarrow \mbox{low broadcast time} \Rightarrow \mbox{low cover
 time},
\]
which extends the following known relations: $ \mbox{low mixing time} \Rightarrow \mbox{low cover time} $~(\cite{Al83,BK89,CF05}) and $\mbox{low mixing time} \Rightarrow \mbox{low broadcast time}$~(\cite{BGPS06,ES07,S07}).

\addcontentsline{toc}{chapter}{\protect\numberline{B}{Bibliography}}
\bibliography{cover,mybib,diss,Propp,bib3,bib2,rainer2}

\bibliographystyle{plain}

\end{document}